\def\T{{ \mathrm{\scriptscriptstyle T} }}
\def\diag{{ \mathrm{diag} }}
\theoremstyle{plain}
\title{Gaussian Process Regression and Classification using International Classification of Disease Codes as Covariates}
\author[1]{Sanvesh Srivastava \thanks{\url{sanvesh-srivastava@uiowa.edu}}}
\author[1]{Zongyi Xu \thanks{\url{zongyi-xu@uiowa.edu}}}
\author[2]{Yunyi Li \thanks{\url{Yunyi.Li@mccombs.utexas.edu}}}
\author[3]{W. Nick Street \thanks{\url{nick-street@uiowa.edu}}}
\author[4]{Stephanie Gilbertson-White \thanks{\url{stephanie-gilbertson-white@uiowa.edu}}}
\affil[1]{Department of Statistics and Actuarial Science, The University of Iowa}
\affil[2]{McCombs School of Business, University of Texas Austin}
\affil[3]{Department of Business Analytics, The University of Iowa}
\affil[4]{College of Nursing, The University of Iowa}
\date{\today}
\begin{document}

\maketitle
\begin{abstract}
  International Classification of Disease (ICD) codes are widely used for encoding diagnoses in electronic health records (EHR). Automated methods have been developed over the years for predicting biomedical responses using EHR that borrow information among diagnostically similar patients. Relatively less attention has been paid on developing patient similarity measures that model the structure of ICD codes and the presence of multiple chronic conditions, where a chronic condition is defined as a set of ICD codes. Motivated by this problem, we first develop a type of string kernel function for measuring similarity between a pair of subsets of ICD codes, which uses the definition of chronic conditions. Second, we extend this similarity measure to define a family of covariance functions on subsets of ICD codes. Using this family, we develop Gaussian process (GP) priors for Bayesian nonparametric regression and classification using diagnoses and other demographic information as covariates. Markov chain Monte Carlo (MCMC) algorithms are used for posterior inference and predictions. The proposed methods are free of any tuning parameters and are well-suited for automated prediction of continuous and categorical biomedical responses that depend on chronic conditions. We evaluate the practical performance of our method on EHR data collected from 1660 patients  at the University of Iowa Hospitals and Clinics (UIHC) with six different primary cancer sites. Our method has better sensitivity and specificity than its competitors in classifying different primary cancer sites and estimates the marginal associations between chronic conditions and primary cancer sites.  
\end{abstract}


\section{Introduction}

\label{s:intro}

EHR data are useful for organizing information about clinical care and developing tools for personalized patient care. They are also used for answering interesting and important research questions that cannot be practically addressed through traditional prospective research. We develop a Bayesian model based on GP priors for predicting biomedical responses depending on the presence of a set of chronic conditions. We assume that the EHR encode diagnoses using the 10th edition of ICD billing code (hereafter called ICD code), so a chronic condition is defined using one or more specific codes from a predefined set of ICD codes \citep{Caletal17}. An ICD code is a three to seven character long string, where the first three characters define the diagnosis category, the next three define the etiology, anatomic site, and severity, and the final character identifies encounters. We develop a family of kernel functions on a pair of ICD code subsets that is based on the definition of ICD codes and chronic conditions. Using any such kernel as a covariance function for a GP prior, we perform Bayesian nonparametric regression and classification using patients' chronic conditions and demographic information as covariates.

Encoding diagnoses using ICD coding system has become the most widely adopted medical diagnostic classification system since it was endorsed by the World Health Assembly in 1990. Automated extraction of data from the EHR using ICD codes allows us to obtain large amounts of precise medical information about chronic conditions. This is a cheap and efficient alternative to the significantly more expensive and slower manual chart review by a healthcare provider. Unfortunately, ICD codes are prone to encoding errors, so care must be exercised in fitting a model with ICD codes as covariates \citep{Honetal19}. This has motivated a significant interest in automating the ICD encoding process to eliminate any sources of errors \citep{Xuetal19}. In a properly curated EHR, models that use ICD codes as covariates in predicting disease risks and related medical conditions are known to be more accurate than those which do not use ICD codes \citep{Horetal17}. Building on these ideas, we achieve an additional but significant improvement in predictive accuracy by exploiting the grouping of diagnoses into chronic conditions and the hierarchical structure of ICD codes. Our Bayesian methods require minimal tuning and provide results with uncertainty estimates, so they can be easily used by a healthcare provider for assisting with patient care.

EHR data have been analyzed using Bayesian nonparametric methods, but the literature on using ICD codes as covariates is sparsely populated. \citet{Henetal16} have developed a {deep} hierarchical factor model for identifying latent subgroups of patients. They summarize the EHR data in terms of counts of different medications usage, laboratory tests, and diagnoses specified using ICD codes. If the EHR data only contains information about the presence or absence of disease, then Bayesian nonparametric extensions of low-rank matrix factorization perform better in discovering latent disease groups \citep{Nietal19}. Recently, deep learning has been widely used for automated regression and classification using EHR data, but the major focus remains on mining clinical notes and not on using the information encoded in ICD codes or on providing uncertainty estimates \citep{Rajetal18}. A major limitation of all these approaches is that they ignore the additional structure provided by the membership of diagnoses in different chronic conditions.

Addressing this gap, we first develop a similarity measure on diagnoses encoded as ICD codes that accounts for the grouping of diagnoses in chronic conditions and the hierarchical structure of ICD codes. A widespread practice is to define  the similarity of two patients to be proportional to the number of common ICD codes in their diagnoses or Jaccard index. Our similarity measure is superior to this practice in that its value could be positive for a pair of non-identical ICD codes, and its magnitude depends on the degree of overlap between the pair. In fact, our similarity measure belongs to the class of string kernels and extends a restricted version of the \emph{boundrange} kernel, which has been widely used in text mining and information retrieval \citep{Lodetal02,ShaCri04}. The similarity between a pair of ICD codes is defined as the total number of common sub-strings between them, where a sub-string always begins at the first character of an ICD code. This similarity measure is extended to a pair of subsets as a weighted average of similarities of all ICD code pairs formed using the two subsets.

Our second major contribution is to use the similarity between patients for nonparametric regression and classification using GP priors. This is done in two stages. First, using the similarity measure, we define the equivalents of polynomial, exponential, and squared exponential kernels on subsets of ICD codes, which represent chronic conditions. Second, we use these kernels as covariance functions for defining GP priors indexed by subsets of ICD codes. These GP priors are used for fitting Bayesian nonparametric regression and classification models that are 
tuned for EHR data analysi \citep{RasWil06}. We develop MCMC algorithms for automated fitting of these models that provide theoretically guaranteed estimation of posterior uncertainty \citep{GhoVan17}. Minimal tuning requirements makes our method extremely attractive for routine applications in analyzing EHR data and for assisting healthcare providers in predicting risks related to chronic conditions. We verify our claims through empirical studies and show that the proposed method provides a more detailed quantification of dependence of primary cancer sites on chronic conditions. 

\section{Motivating EHR Data}
\label{s:mot-data}

Multimorbidity is the simultaneous occurrence of more than one chronic condition. Developing operational measures of multimorbidity is an active research area due to its importance in improving patient care. \citet{Caletal17} have operationalized the concept of multimorbidity into a list of 918 ICD codes that map on to a list of 60 chronic conditions. This serves as a motivation for developing a classifier that accounts for the grouping of diagnoses into chronic conditions. We show that this yields better accuracy in predicting primary cancer sites in UIHC EHR data.

The data for this analysis came from the EHR at UIHC relating to the set of patients who received cancer treatment beginning in 2017. The cohort included 18 years or older patients who were diagnosed with malignant solid neoplasm of any type or site. The data included diagnoses of the patients encoded as ICD codes and the race and marital status of every patient. We filtered patients who had at least one of the 58 non-cancerous chronic conditions, resulting in a sample size of 1660. The main question here is to identify chronic conditions or diagnoses that are predictive of a primary cancer site. Unfortunately, the Bayesian toolkit provides limited options for answering such biomedical questions. Using the UIHC EHR data as a representative example, we develop GP-based regression and classification models that are tuned for solving such biomedical problems, where the subsets of ICD codes are covariates.

The main goal is to predict marginal associations between chronic conditions and cancer primary sites and a minor goal is to predict the primary cancer site using the ICD codes, which are structured strings; for example, J301 encodes allergy and D50, D51 denote different kinds of anaemia. Clearly, using dummy variables for representing these ICD codes is inappropriate because D50, D51 denote the same disease but are assigned different dummy variables. Motivated from the text mining literature, a useful similarity measure for strings is defined by counting the number of matching substrings of different lengths. Accounting for the hierarchical structure of ICD codes, if we modify this definition by enforcing the substrings to always start at the beginning, then the similarities of the three pairs (J301, D50), (J301, D51), and (D50, D51) are 0, 0, 2, respectively, which are more realistic. In the next section, we develop this idea further to account for the extra structure provided by the 58 chronic conditions while down-weighting the contributions of commonly occurring diagnoses.

\section{Kernel Functions for Diagnoses}

\subsection{Kernel Function on ICD Codes}

Consider the structure of ICD codes. Let $t$ be an ICD code, $t^*$ be the longest ICD code, $| t |$ denote the number of characters in $t$, and $t_{1:i}$ be the contiguous substring of length $i$ that starts at the first character of $t$. The first character of $t$ is  $t_{1:1}$ and $t_{1:l} = t$ if $l > |t|$. Let $\Tcal$ be the set of all ICD codes, $\Acal$ and $\Dcal$ be the sets of letters and numbers that are used to represent an ICD code, and $\Bcal = \Acal \left( \Acal \cup \Dcal \right)^{|t^*| - 1}$ is set of all alphanumeric substrings that have an letter as their first character and have lengths $1, 2, \ldots,$ or $|t^*|$. Then, $\Bcal$ contains all contiguous substrings of ICD codes in $\Tcal$ that start at the first character in an ICD code and $|\Tcal| < |\Bcal| = |\Acal| \left( |\Acal| + |\Dcal| \right)^{|t^*| - 1}$.  

The kernel function on two ICD codes is defined as a Euclidean dot product between their feature maps. Let $\psib(t)$ be the feature map of an ICD code $t \in \Tcal$ and $\Fcal = \{\psib(t): t \in \Tcal\}$ be the feature space. Then, for every $t \in \Tcal$, $\psib(t) \in \{0, 1\}^{|t^*||\Bcal|}$ and is defined as 
\begin{align}
  \label{eq:1}
  \psib: \Tcal \mapsto \Fcal, \quad \psib(t) =  \{\psib_b(t)\}_{b \in \Bcal}, \quad \psib_b(t) = \{1(b = t_{1:1}), \ldots, 1(b = t_{1:|t^*|})\}^\T, 
\end{align}
where $1(\cdot)$ is 1 if $\cdot$ is true and 0 otherwise. The kernel function on $\Tcal \times \Tcal$ is defined using $\psib(\cdot)$ and known $|t^*| \times |t^*|$ symmetric positive semi-definite matrices $(\Lambdab_b)_{b \in \Bcal}$ as 
\begin{align}
  \label{eq:e1}
  \kappa_0(t, t' \mid \Lambdab) = \underset{b \in \Bcal} {\sum} \psib^\T_b(t) \Lambdab_b \psib_b(t') \equiv \psib^\T(t) \Lambdab \psib(t'), \; t, t' \in \Tcal, 
\end{align}
where $\Lambdab = \diag\{\Lambdab_b: b \in \Bcal\}$ is a block diagonal matrix. Two useful special cases of \eqref{eq:e1} are obtained by setting $\Lambdab_b$ to be a diagonal matrix  in \eqref{eq:e1} such that (i) $\Lambdab_b = \lambda_b I$ and (ii) $(\Lambdab_b)_{aa} = \lambda^a$ for every $b \in \Bcal$ and $a = 1, \ldots, |t^*|$, where $I$ is an identity matrix, $\lambda_b > 0$  for every $b \in \Bcal$, and $\lambda \in (0, 1)$. We prove in the next section that $\kappa_0(t, t' \mid \Lambdab)$ in \eqref{eq:e1} is a valid kernel function on $\Tcal \times \Tcal$.

The kernel in \eqref{eq:e1} belongs to the class of string kernels that are widely used in text mining and information retrieval \citep{ShaCri04}. Two popular string kernels are spectrum and boundrange kernels \citep{Lodetal02}. The $k$-spectrum kernel defines the similarity to be the number of matching substrings between two strings of length $k$ exactly. The $k$-boundrange kernel instead defines the similarity to be the number of matching substrings between two strings of length less than or equal to $k$. The substrings can be non-contiguous in both kernels and the substring matches can be weighted by a factor that decays exponentially with the substring length. The kernel in \eqref{eq:e1} is restricted version of $|t^*|$-boundrange kernel where the substrings always start at the beginning of an ICD code.  

\subsection{Kernel Function on Subsets of ICD Codes}

We now define the kernel function on subsets of ICD codes that represent diagnoses. Let $2^{\Tcal}$ be the power set of $\Tcal$, $\tb = \{t_1, \ldots, t_r \}$ be an element of $2^{\Tcal}$, $\Psib(\tb)$ be the feature map of $\tb$, and $\Fcal_{\Psib} = \{\Psib(\tb): \tb \in 2^{\Tcal}\}$ be the feature space of subsets of ICD codes. The feature vector $\Psib(\tb)$ and the kernel function $\kappa_1$ on $2^{\Tcal} \times 2^{\Tcal}$ are defined using 
$\psib(\cdot)$  in \eqref{eq:1} and $\kappa_0(\cdot, \cdot \mid \Lambdab)$ in \eqref{eq:e1}, respectively, as
\begin{align}
  \label{eq:3}
  \Psib(\tb) = \sum_{{t \in \tb}} \psib(t) , \quad \kappa_1(\tb, \tb' \mid \Lambdab ) = \sum_{{t \in \tb}} \sum_{{t' \in \tb'}} \kappa_0(t, t' \mid \Lambdab) = \Psib^\T(\tb) \Lambdab \Psib(\tb') , 
\end{align}
for $\tb, \tb' \in 2^{\Tcal}$, where $\Psib(\tb)$ is a $(|t^*||\Bcal|)$-dimensional vector with entries in $\{0\} \cup \NN $. The kernel $\kappa_1$ is a \emph{derived subsets} kernel with the \emph{base} kernel $\kappa_0$; see Definition 9.41 in \citet[Page 317]{ShaCri04}. 

In the motivating EHR data, covariates include subsets of ICD codes that have an additional structure specified by their relation to a chronic condition. Let $C$ be the total number of chronic conditions and $\Tb_c \in 2^{\Tcal}$ be the subset of ICD codes that defines the chronic condition $c$ $(c=1, \ldots, C)$. A patient has chronic condition $c$ if the patient's diagnoses consists of an ICD code that belongs to $\Tb_c$. Accounting for the $C$ chronic conditions, represent a patient's diagnoses as the set $\Tb = \{\tb_1, \ldots, \tb_c, \ldots, \tb_C\} \in \left( 2^{\Tcal} \right)^C $, where $\tb_c \subseteq \Tb_c$. A kernel function on $\left( 2^{\Tcal} \right)^C \times \left( 2^{\Tcal} \right)^C$ that accounts for the structure imposed by $C$ chronic conditions is
\begin{align}
  \label{eq:4}
  \kappa_2(\Tb, \Tb' \mid \wb, \Lambdab) = \sum_{c=1}^C w_c \, \kappa_1(\tb_c, \tb'_c \mid \Lambdab), \quad \Tb, \Tb' \in \left( 2^{\Tcal} \right)^C,  \quad w_c > 0, 
\end{align}
where $\Lambdab$ is defined in \eqref{eq:e1}, $\wb = (w_1, \ldots, w_C)$, and $w_c$ indicates the importance of chronic condition $c$ in defining the similarity between two patients with diagnoses $\Tb$ and $\Tb'$, respectively. 
The kernel $\kappa_2$ assumes the effect of $C$  chronic conditions on the response is additive, but other valid methods for combining kernels $\kappa_1(\tb_1, \tb'_1 \mid \Lambdab), \ldots, \kappa_1(\tb_C, \tb'_C \mid \Lambdab)$ can be also used in \eqref{eq:4}. 

The kernel $\kappa_2$ in \eqref{eq:4} is affected by the number of elements in $\Tb$ and $\Tb'$. Let $\| \cdot \|$ be the Euclidean norm. The value of $\kappa_{1}$ in \eqref{eq:3} is impacted by $\tb$ in that $\| \Psib(\tb) \|$ increases with the cardinality of $\tb$. This implies that $\kappa_{2}$ in \eqref{eq:4} is large if the cardinality of $\Tb$ is large; therefore, the length of feature maps impacts the similarity measure instead of the chronic conditions. 
This undesirable effect is removed by normalizing the feature maps of $\Tb, \Tb'$ before computing the dot product, which results in the kernel
\begin{align}
  \label{eq:7}
  \kappa (\Tb, \Tb' \mid \wb, \Lambdab) &= \frac{\kappa_2 (\Tb, \Tb' \mid \wb, \Lambdab)}{\left\{ \kappa_2 (\Tb, \Tb \mid \wb, \Lambdab)\right\}^{1/2} \left\{ \kappa_2 (\Tb', \Tb' \mid \wb, \Lambdab) \right\}^{1/2}}.
\end{align}
All kernels in this paper are derived from $\kappa$ in \eqref{eq:7}, which depends on the kernels $\kappa_0$, $\kappa_1$, and $\kappa_2$ defined earlier.

The following theorem proves that $\kappa$ is a valid kernel function on  $(2^{\Tcal})^C \times (2^{\Tcal})^C$ and defines a metric and the equivalents of polynomial, exponential, and squared exponential (SE) kernels on $(2^{\Tcal})^C \times (2^{\Tcal})^C$, respectively. 
\begin{theorem}\label{thm1}
  Let $\{\Lambdab_b\}_{b \in \Bcal}$ and $\wb$ be defined as in \eqref{eq:e1} and \eqref{eq:4}, $\Tb, \Tb' \in (2^{\Tcal})^C$, $\sigma^2 > 0$, $\phi > 0 $, and
  $$d(\Tb, \Tb' \mid \wb, \Lambdab) = \left\{ \kappa(\Tb, \Tb \mid \wb, \Lambdab) + \kappa(\Tb', \Tb' \mid \wb, \Lambdab) - 2 \kappa(\Tb, \Tb' \mid \wb, \Lambdab) \right\}^{1/2}.$$
  Then, given $\Lambdab$ and $\wb$,
  \begin{enumerate}
  \item $\kappa(\cdot, \cdot \mid \wb, \Lambdab)$ and $d(\cdot, \cdot \mid \wb, \Lambdab)$ are valid kernel and distance functions on $(2^{\Tcal})^C \times (2^{\Tcal})^C$; 
  \item if $\Lambdab = \Rb^{\top} \Rb$, where $\Rb$ is an upper triangular matrix, then the feature map of $\kappa$ is 
    \begin{align}
      \label{eq:thm-feat}
      \overline \Psib_{\wb, \Rb} (\Tb) = \frac{\Psib_{\wb, \Rb} (\Tb)}{\| \Psib_{\wb, \Rb} (\Tb) \|}, \quad \Psib_{\wb, \Rb} (\Tb) = \{w_1^{1/2} \Rb \Psib(\tb_1), \ldots, w_C^{1/2} \Rb \Psib(\tb_C)\}^\T,
    \end{align}
    where $\Psib_{\wb, \Rb} (\Tb)$ is a $(C|t^*||\Bcal|)$-dimensional vector; and 
  \item the equivalents of polynomial kernel of order $s$ and  $\gamma$-exponential kernel on $(2^{\Tcal})^C \times (2^{\Tcal})^C$, respectively, are defined as
  \end{enumerate}  
  \begin{align}
    \label{eq:thm-kern}
    \kappa_s(\Tb, \Tb' \mid \sigma^2, \wb, \Lambdab) &=  \sigma^2\left\{1 + \kappa(\Tb, \Tb' \mid \wb, \Lambdab) \right\}^s, \quad  s \in \NN,
                                                      \nonumber \\
    \kappa_\gamma^e(\Tb, \Tb'  \mid \sigma^2, \phi, \wb, \Lambdab) &= \sigma^2 e^{- \phi \, \{d(\Tb, \Tb' \mid  \wb, \Lambdab)\}^\gamma}, \quad \gamma \in [1, 2],
  \end{align}
  where $\sigma^2$ and $\phi$  play the role of variance and inverse length-scale parameters, respectively, and $\kappa_\gamma^e$ reduces to the exponential and SE kernels when $\gamma $ equals 1 and 2, respectively. 
\end{theorem}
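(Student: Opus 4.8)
The plan is to build $\kappa$ up from its constituents $\kappa_0,\kappa_1,\kappa_2$ using the standard closure properties of kernels, then to read off the feature map by a single matrix factorization, and finally to obtain the polynomial and $\gamma$-exponential families by combining those closure properties with the classical theory of conditionally negative definite functions. Throughout I work in the feature-space picture: a symmetric positive semidefinite matrix $M$ makes $(x,y)\mapsto x^\T M y$ a valid kernel because any factorization $M=\Rb^\T\Rb$ gives $x^\T M y=(\Rb x)^\T(\Rb y)$, and kernels are closed under nonnegative scaling, sums, products, and normalization (see \citet{ShaCri04}).

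For part 1, since each block $\Lambdab_b$ is symmetric positive semidefinite it admits an upper triangular factor $\Lambdab_b=\Rb_b^\T\Rb_b$, so $\Rb=\diag\{\Rb_b:b\in\Bcal\}$ is upper triangular and $\Lambdab=\Rb^\T\Rb$. Then $\kappa_0(t,t'\mid\Lambdab)=(\Rb\psib(t))^\T(\Rb\psib(t'))$ and $\kappa_1(\tb,\tb'\mid\Lambdab)=(\Rb\Psib(\tb))^\T(\Rb\Psib(\tb'))$ are inner products of explicit feature maps, hence valid kernels; indeed $\kappa_1$ is exactly the derived subsets kernel of Definition 9.41 in \citet{ShaCri04}. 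Because $w_c>0$, the sum $\kappa_2=\sum_{c=1}^C w_c\,\kappa_1(\tb_c,\tb'_c\mid\Lambdab)$ is a positive combination of kernels acting on the respective coordinates and is therefore a valid kernel on $(2^{\Tcal})^C\times(2^{\Tcal})^C$. Normalization preserves the kernel property, so $\kappa$ in \eqref{eq:7} is valid with $\kappa(\Tb,\Tb\mid\wb,\Lambdab)=1$ whenever $\kappa_2(\Tb,\Tb\mid\wb,\Lambdab)>0$, which holds for every $\Tb$ with at least one nonempty $\tb_c$ (the regime relevant to the EHR data). Finally, $d(\Tb,\Tb'\mid\wb,\Lambdab)$ equals the distance in the feature space $\Hcal$ of $\kappa$, whose explicit normalized feature map is exhibited in part 2; it thus inherits nonnegativity, symmetry, and the triangle inequality and is a valid (pseudo)metric.

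For part 2, I substitute the factorization $\Lambdab=\Rb^\T\Rb$ into \eqref{eq:4} and collect terms:
\begin{align*}
\kappa_2(\Tb,\Tb'\mid\wb,\Lambdab)=\sum_{c=1}^C w_c\,(\Rb\Psib(\tb_c))^\T(\Rb\Psib(\tb'_c))=\Psib_{\wb,\Rb}^\T(\Tb)\,\Psib_{\wb,\Rb}(\Tb'),
\end{align*}
with $\Psib_{\wb,\Rb}(\Tb)$ the concatenation in \eqref{eq:thm-feat}. Dividing by the norms as in \eqref{eq:7} then identifies the feature map of $\kappa$ as the normalized vector $\overline\Psib_{\wb,\Rb}(\Tb)$. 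This step is a direct computation once the factorization is in hand.

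For part 3, the polynomial kernel is immediate: $1+\kappa$ is a kernel (adding the constant kernel), its $s$-th power is a kernel (products of kernels), and scaling by $\sigma^2>0$ preserves validity. The $\gamma$-exponential family is the main obstacle, and here the argument is analytic rather than algebraic. The key observation is that, by part 2, $d(\Tb,\Tb'\mid\wb,\Lambdab)^2=\|\overline\Psib_{\wb,\Rb}(\Tb)-\overline\Psib_{\wb,\Rb}(\Tb')\|^2$ is a squared distance in a Hilbert space and is therefore conditionally negative definite. Writing $d^\gamma=(d^2)^{\gamma/2}$ with exponent $\gamma/2\in[1/2,1]$, I invoke the classical fact that a nonnegative fractional power (exponent in $(0,1]$) of a conditionally negative definite kernel is again conditionally negative definite, via the integral representation $u^{\gamma/2}\propto\int_0^\infty(1-e^{-tu})\,t^{-1-\gamma/2}\,dt$ for $\gamma/2\in(0,1)$ and trivially for $\gamma/2=1$. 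Hence $\{d(\Tb,\Tb'\mid\wb,\Lambdab)\}^\gamma$ is conditionally negative definite, and Schoenberg's theorem gives that $e^{-\phi\,\{d\}^\gamma}$ is positive definite for every $\phi>0$; multiplying by $\sigma^2>0$ yields the valid kernel $\kappa_\gamma^e$. Specializing $\gamma=2$ recovers the Gaussian/SE kernel $e^{-\phi\|\cdot-\cdot\|^2}$ and $\gamma=1$ the exponential kernel, as claimed. The only delicate point is to confirm that the fractional-power representation covers the whole range $[1,2]$, with the endpoints $\gamma=1$ and $\gamma=2$ treated separately.
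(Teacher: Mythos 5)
Your proposal is correct, and for parts (1), (2), and the polynomial half of (3) it follows essentially the same path as the paper: Cholesky-factor each $\Lambdab_b$, exhibit $\Rb\psib(t)$ and $\Rb\Psib(\tb)$ as explicit feature maps for $\kappa_0$ and $\kappa_1$, invoke closure of kernels under positive combinations (the paper cites Proposition 3.22 of \citet{ShaCri04}) to handle $\kappa_2$, normalize to get $\kappa$ with feature map $\overline\Psib_{\wb,\Rb}$, identify $d$ with the Euclidean distance between normalized feature maps, and obtain $\kappa_s$ from the constant-plus-product closure (the paper cites Proposition 3.24). Where you genuinely diverge is the $\gamma$-exponential family: the paper disposes of it by citation, noting that $d$ is a Euclidean distance on $[0,1]^{C|t^*||\Bcal|}$ and appealing to Section 4.2.1 (Equation 4.18) of \citet{RasWil06} for the validity of $e^{-\phi r^\gamma}$, $\gamma\in(0,2]$. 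You instead unpack the classical argument behind that textbook fact: $d^2$ is a squared Hilbert-space distance, hence conditionally negative definite; the fractional power $(d^2)^{\gamma/2}$ with $\gamma/2\in(0,1]$ stays conditionally negative definite via the Bernstein-type integral representation $u^{\gamma/2}\propto\int_0^\infty(1-e^{-tu})\,t^{-1-\gamma/2}\,dt$; and Schoenberg's theorem then yields positive definiteness of $e^{-\phi d^\gamma}$. This buys a self-contained proof that also makes transparent exactly where the constraint $\gamma\le 2$ comes from (the fractional exponent must not exceed $1$), and it shows the result actually holds on all of $(0,2]$, not just $[1,2]$; the paper's route is shorter but opaque on this point. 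One small caution: your side remark that $\kappa_2(\Tb,\Tb\mid\wb,\Lambdab)>0$ ``holds for every $\Tb$ with at least one nonempty $\tb_c$'' is only guaranteed when $\Lambdab$ is nonsingular (or diagonal with positive entries, as in the paper's special cases); if $\Lambdab$ is merely positive semi-definite, a nonempty $\tb_c$ could still have $\Rb\Psib(\tb_c)=\zero$. You are nonetheless more careful than the paper here, which does not address the degenerate normalization at all, and your characterization of $d$ as a pseudometric (distinct diagnoses can share a normalized feature map) is likewise more precise than the paper's claim that it is a distance.
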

The proof of this theorem is in the supplementary material along with other proofs. The polynomial kernel has finite dimensional feature. If $s > 1$, then the polynomial kernel $\kappa_s$ is obtained by a non-linear embedding of the kernel $\kappa$. The dimension of the feature space for $\kappa_s$ is  ${C|t^*||\Bcal| + s \choose s}$  \citep[Proposition 9.2,][]{ShaCri04}. The feature space of $\kappa_{\gamma}^e$ lies in a Hilbert spaces of functions, where the ``smoothness'' of a function depends on the chosen kernel; see Section \ref{sub-sec:theory} for the precise definitions and greater details. We use the SE kernel in our experiments due to its popularity.

\begin{figure}
  
  \begin{subfigure}{.5\textwidth}
    \centering
    \includegraphics[width=.8\linewidth]{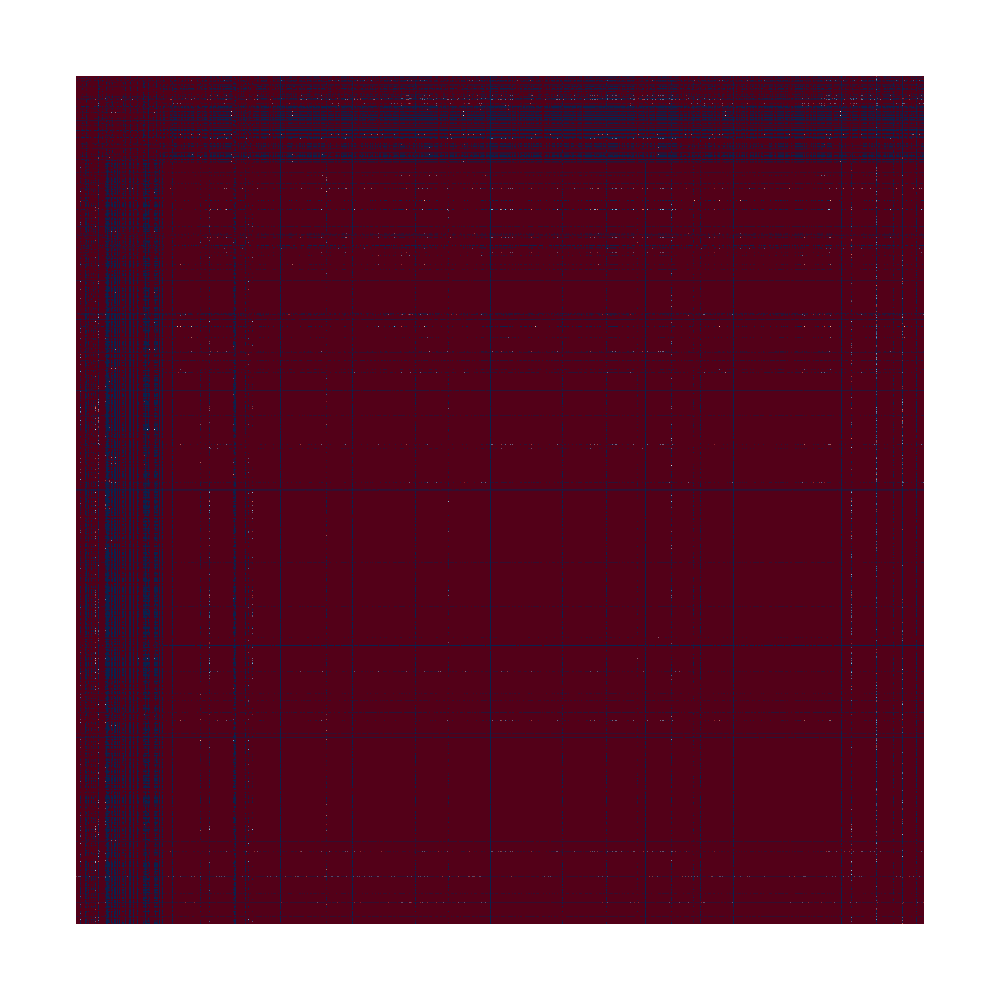}
    \caption{Kernel matrix of $\kappa_2$.}
    \label{fig:sfig1}
  \end{subfigure}%
  \begin{subfigure}{.5\textwidth}
    \centering
    \includegraphics[width=.8\linewidth]{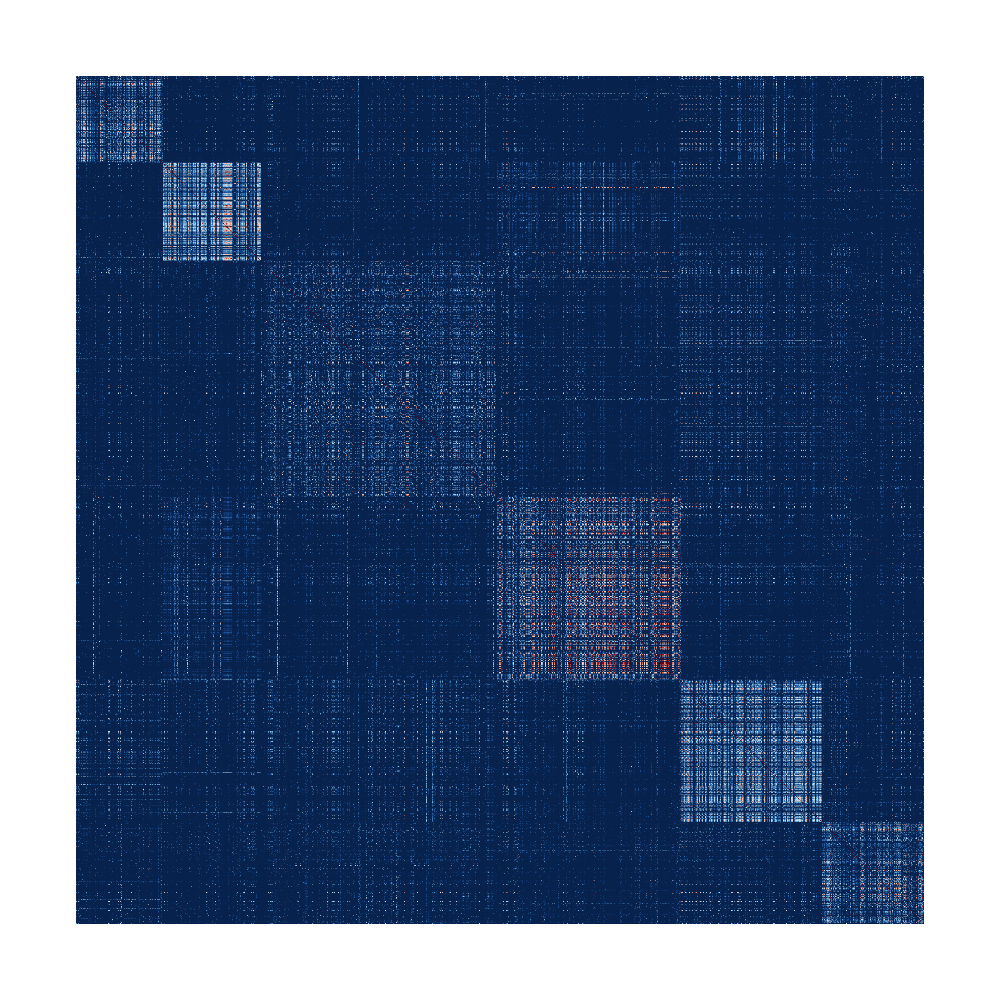}
    \caption{Kernel matrix of $\kappa$.}
    \label{fig:sfig2}
  \end{subfigure}
  \begin{subfigure}{.5\textwidth}
    \centering
    \includegraphics[width=.8\linewidth]{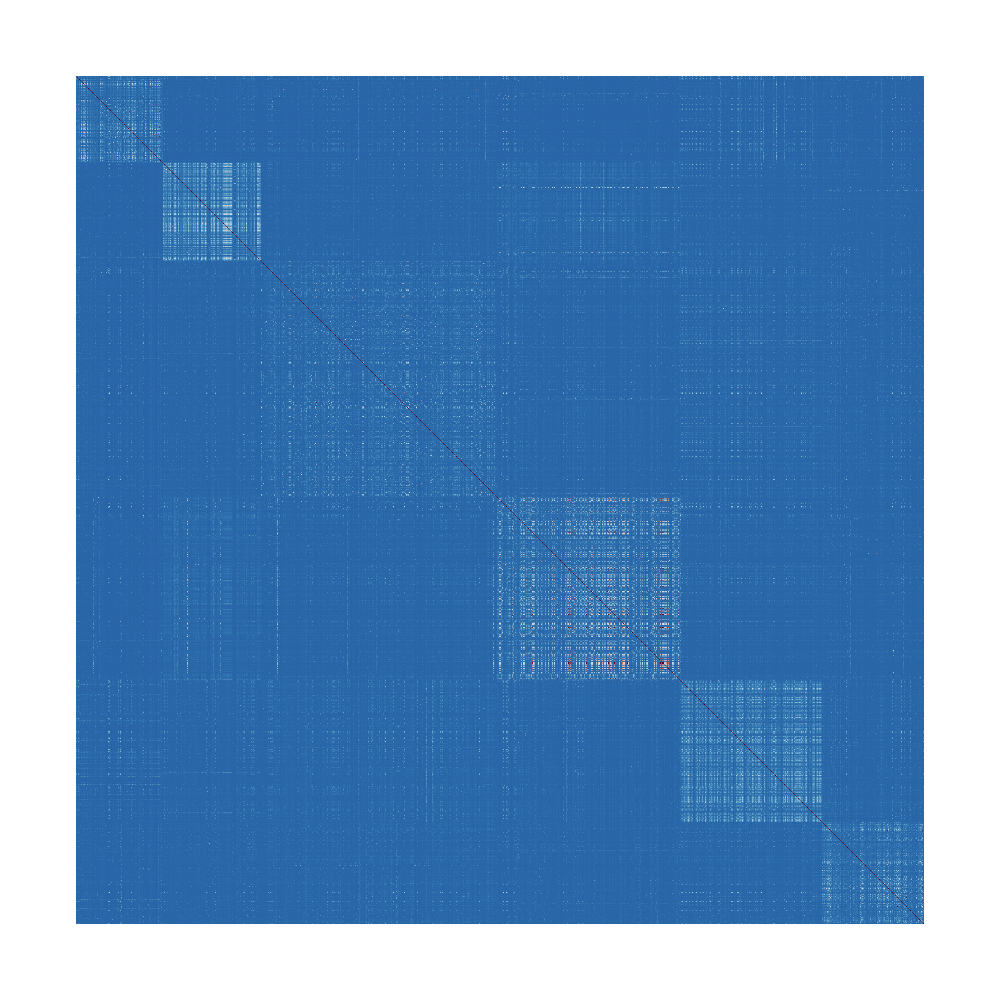}
    \caption{Exponential kernel matrix.}
    \label{fig:sfig5}
  \end{subfigure}
  \begin{subfigure}{.5\textwidth}
    \centering
    \includegraphics[width=.8\linewidth]{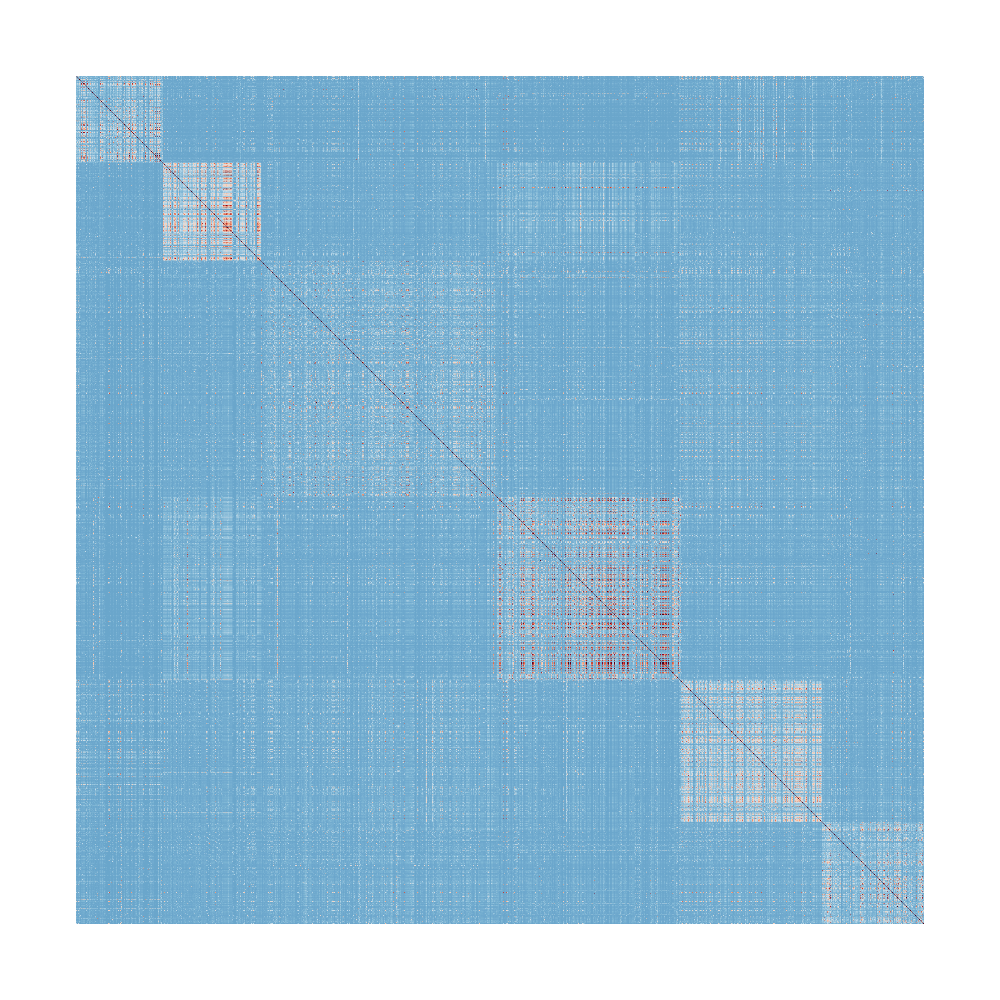}
    \caption{Radial basis function kernel matrix.}
    \label{fig:sfig6}
  \end{subfigure}
  \begin{subfigure}{.5\textwidth}
    \centering
    \includegraphics[width=.8\linewidth]{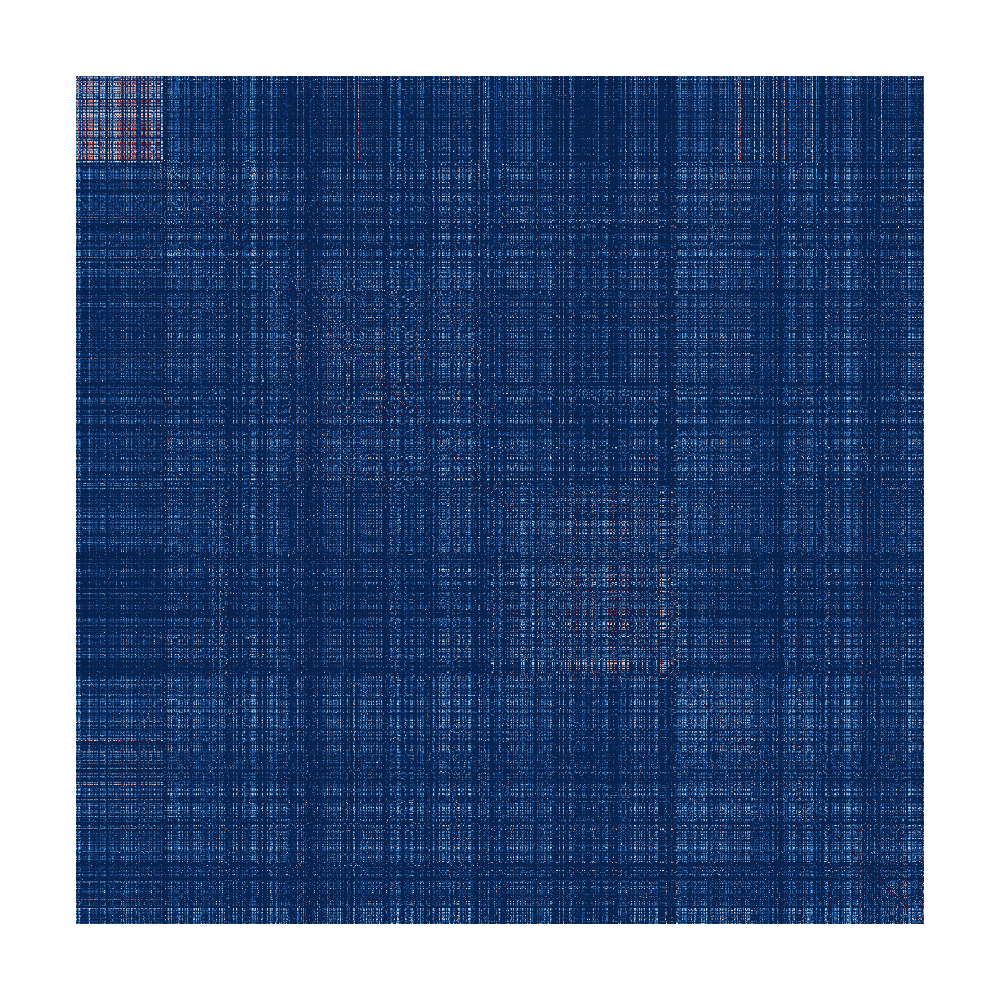}
    \caption{Spectrum kernel matrix.}
    \label{fig:sfig3}
  \end{subfigure}
  \begin{subfigure}{.5\textwidth}
    \centering
    \includegraphics[width=.8\linewidth]{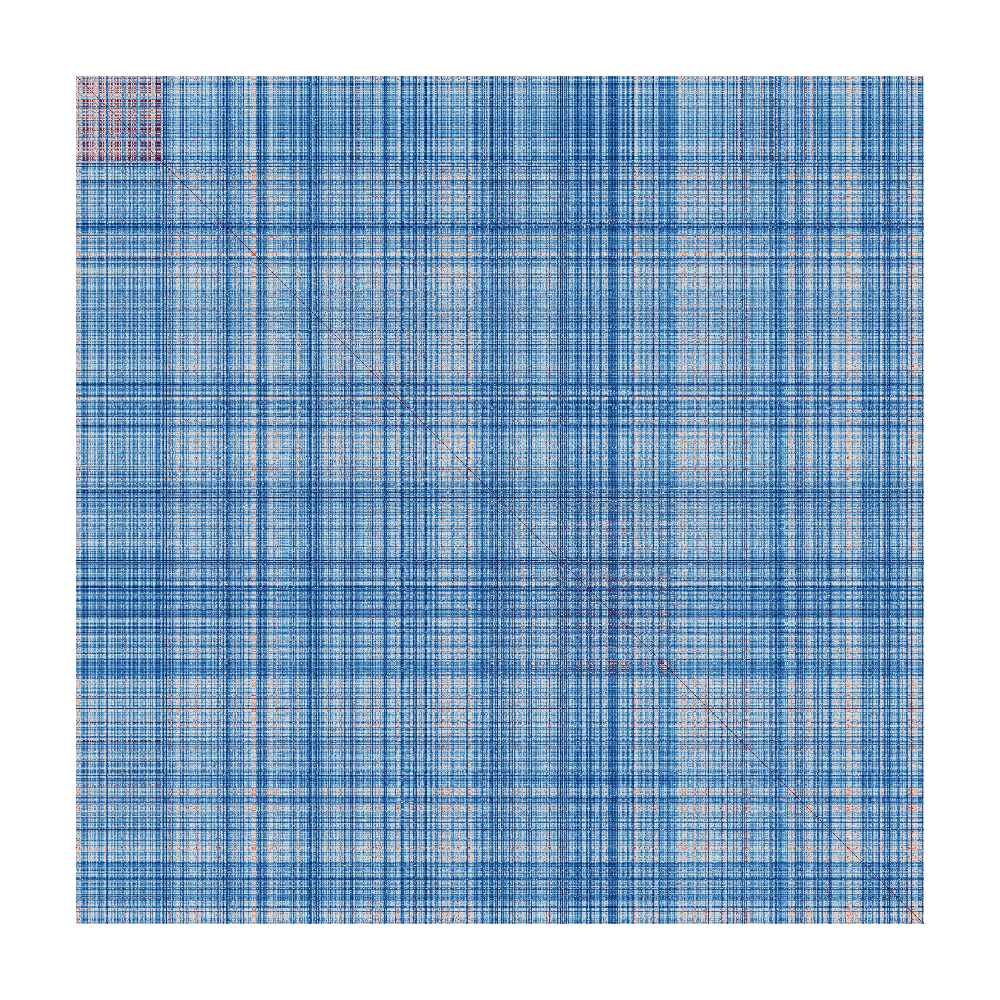}
    \caption{Boundrange kernel matrix.}
    \label{fig:sfig4}
  \end{subfigure}
  \caption{{\small Heatmaps of kernel matrices for patients in the UIHC EHR data. The color palette represents 0 and 1 using dark blue and dark red colors, respectively. The first four kernel matrices in (a)--(d) set $\Lambdab = \Ib$,
    $\sigma^2=1$, $\phi=1$, and $w_c \propto \kappa_1(\Tb_c, \Tb_c)$ for kernels $\kappa_2, \kappa, \kappa_1^e$, and $\kappa_2^e$, where $\Tb_c$ is the set of ICD codes defining the $c$th chronic condition. The last two kernel matrices are obtained using spectrum and boundrange kernels, which are popular string kernels in text mining, with substring length equalling 3. The block correlation structures are better captured in (b), (c), and (d) than (a), (e), and (f), indicating the superiority of the proposed set of kernel functions.
  }%
  }
  \label{fig:fig}
\end{figure}

We illustrate the application of six kernel functions in capturing the similarity of patients in the UIHC EHR data, where we expect patients with common primary sites to have similar sets of chronic conditions (Figure \ref{fig:fig}). Let $\Tb_i$ be the ICD codes for the patent $i$, $n$ be the sample size, $k(\cdot, \cdot)$ be a kernel function on $(2^{\Tcal})^C \times (2^{\Tcal})^C$. Then, the kernel matrix defined using $k$ is an $n \times n$ matrix with $(i,j)$-th entry $k(\Tb_i, \Tb_j)$. To highlight the similarity of patients, we have grouped the patients into six blocks depending on their primary cancer sites. The kernel matrix with $k=\kappa_2$ in \eqref{eq:4} fails to capture the similarities of patients with common primary sites  (Figure \ref{fig:sfig1}). On the other hand, the kernel matrix with $k=\kappa$ in \eqref{eq:7}, which is the normalized version of $\kappa_2$, captures the patient similarities, where the diagonal blocks denote patient with common primary sites and the off-diagonal blocks capture similarities of patients with cancers at two different sites (Figure \ref{fig:sfig2}). The same observation is true for the $\gamma$-exponential kernel in \eqref{eq:thm-kern} with $\gamma$ equals 1 and 2, respectively (Figures \ref{fig:sfig5} and \ref{fig:sfig6}). The diagonal and off-diagonal blocks are clearest for radial basis function kernel. The spectrum and boundrange kernel matrices fail to capture similarities of patients except those with cancer at brain and nervous system, where the diagnoses are very similar compared to cancers at other sites (Figures \ref{fig:sfig3} and \ref{fig:sfig4}).

\section{Modeling With Subsets of ICD Codes as Covariates}

\subsection{Setup}
\label{mdl-setup}

Consider the setup for regression and classification problems with ICD codes as covariates. Let $n$ be the sample size and $(y_i, \xb_i, \Tb_i)$ be the data for subject $i$ ($i=1, \ldots, n$), where $y_i$ is the response, $\xb_i^\T = (x_{i1}, \ldots, x_{ip})$ is the vector of $p$-dimensional covariates excluding ICD codes with $x_{i1}=1$, and $\Tb_i = \{\tb_{i1}, \ldots, \tb_{iC}\} \in (2^{\Tcal})^C$. For every $i$, $\tb_{ic}$ denotes the diagnoses of subject $i$ that belong to chronic condition $c$ and $\tb_{ic}$ can be an empty set $(c=1, \ldots, C)$. In regression and classification problems, $y_i \in \RR$ and $y_i \in \{0,1\}$, respectively, and $(x_{i2}, \ldots, x_{ip})^{\T} \in \RR^{p-1}$ for every $i$. We also predict the responses and estimate the covariate effects for a given collection of covariates including ICD codes $(\xb_j^*, \Tb_j^*)$ $(j=1, \ldots, m)$, where $\Tb_j^* \in (2^{\Tcal})^C$. 

We model the effects of $\xb_i$ and $\Tb_i$ independently. Let $\betab \in \RR^{p}$ and $f:(2^{\Tcal})^C \mapsto \RR$. Then, the covariate effects of $\xb_i$ and $\Tb_i$ are $\xb^{\T}_i \betab$ and $f(\Tb_i)$, respectively, and the overall covariate effect for subject $i$ as $\xb^{\T}_i \betab + f(\Tb_i)$. In practice, $\betab$ and $f$ are unknown. Choosing a Bayesian approach, we estimate the posterior distributions of $\betab$ and $f$ given $\{(y_i, \xb_i, \Tb_i)\}_{i=1}^n$. We use a parametric model for the estimating the effects of $\xb_i$ because our focus in on non-parametric regression and classification using $\Tb \in (2^{\Tcal})^C$.
. 

\subsection{Regression Using Subsets of ICD Codes as Covariates}
\label{mdl-reg}

Consider the nonparametric regression model. Let $\epsilon_i$ denote the idiosyncratic error for subject $i$, $\epsilon_i$s are independent and identically distributed (iid) with mean 0, and $\tau^2 >0 $ denotes the error variance. Then, the model assumes that               
\begin{align}
  \label{eq:m1}  
  y_i = \xb_i^\T \betab + f(\Tb_i) + \epsilon_i, \quad \epsilon_i \overset{\text{iid}}{\sim} N(0, \tau^2), \quad i =1, \ldots, n, 
\end{align}
where $N(a, b)$ denotes the Gaussian density with mean $a$ and variance $b$. We put a GP prior on $f(\cdot)$ with 0 mean function and covariance function $\kappa_f(\cdot, \cdot \mid \thetab)$ on $(2^{\Tcal})^C \times (2^{\Tcal})^C$,
where $\thetab$ are the parameters specifying $\kappa_f$. The model is completed by putting a prior distribution on $(\betab, \tau^2, \thetab)$ and its form depends on the choice of $\kappa_f$.

We develop posterior inference algorithms for $\kappa_f$ equalling $\kappa_{\gamma}^e$ defined in Theorem \ref{thm1}. We exclude the polynomial kernel because it yields a finite dimensional regression model, which is inflexible compared to the non-parametric regression model obtained using $\kappa_{\gamma}^e$. Setting $\wb = (1, \ldots, 1)$ and $\Lambdab_b$ to be an identity matrix $(b \in \Bcal)$ in \eqref{eq:thm-feat} and \eqref{eq:thm-kern}, we define
\begin{align}
  \label{eq:m2}
  \kappa_{\text{exp}}(\Tb, \Tb') = \kappa_1^e(\Tb, \Tb'), \quad
  \kappa_{\text{SE}}(\Tb, \Tb') = \kappa_2^e(\Tb, \Tb'), \quad \Tb, \Tb' \in (2^{\Tcal})^C \times (2^{\Tcal})^C
\end{align}
The parameter for $\kappa_{\text{exp}}$ and $ \kappa_{\text{SE}}$ is $\thetab = \{\sigma^2, \phi\}$, where $\sigma^2, \phi$ are positive scalars. Denote the kernel matrix as $\Kb_{\thetab}$ with $\{\Kb_{\thetab}\}_{ij} = \kappa_{\text{exp}}(\Tb_i, \Tb_j)$ or $\{\Kb_{\thetab}\}_{ij} = \kappa_{\text{SE}}(\Tb_i, \Tb_j)$, where $i, j \in \{1, \ldots, n\}$. Let $\yb = (y_1, \ldots, y_n)^\top$, $\Xb$ be the $n \times p$ matrix with $\xb_i^\T$ as its $i$th row, $\fb = \{f(\Tb_1), \ldots, f(\Tb_n)\}^\T$, and $\epsilonb = (\epsilon_1, \ldots, \epsilon_n)^\T$. Then, the hierarchical Bayesian model for $\yb$ based on \eqref{eq:m1} and parameter-expansion is defined as 
\begin{align}
  \label{eq:8}
  &\yb = \Xb \betab + \fb + \epsilonb, \quad \epsilonb \mid \sigma^2, \alpha \sim N(\zero, \sigma^2 \alpha \Ib), \quad  (\betab, \sigma^2) \propto \sigma^{-2}, \quad \fb \mid \thetab \sim N(\zero, \Kb_{\thetab}),
\end{align}
where $\betab$ and $\fb$ are assumed to be independent apriori, $\alpha = \tau^2/\sigma^2$ is the inverse of signal-to-noise ratio, and $\phi \in (a_{\phi}, b_{\phi})$, where $0 < a_{\phi} < b_{\phi}$.  We assign a prior on $(\phi, \alpha)$ through $(u_1, u_2) \in \RR^2$ as 
\begin{align}
  \label{eq:81}
  \phi = a_{\phi} + (b_{\phi} - a_{\phi}) / (1 + e^{-u_1}), \quad \alpha = e^{u_2}, \quad (u_1, u_2)^\top \sim N\{\zero, \diag(b_1, b_2)\}, 
\end{align}
where $b_1, b_2 > 0$. The parameters $\sigma^2,\phi$ are non-identified in $\kappa_{\text{exp}}, \kappa_{\text{SE}}$, but this  does not affect the inference on $f$ or prediction of the response \citep{RasWil06}.

The MCMC algorithm for posterior inference and predictions in \eqref{eq:8} is a Gibbs-type sampling algorithm, with a step that draws $(\phi, \alpha)$ using Elliptical Slice Sampling (ESS) \citep{Nisetal14}. Let $\fb^* = \{f(\Tb_1^*), \ldots, f(\Tb^*_m)\}^\T$ be the value of $f$ and $\yb^* = (y_1^*, \ldots, y_m^*)^\T$ be the response at $(\xb_j^*, \Tb_j^*)$ $(j=1, \ldots, m)$. For notational convenience, define $\Kb(\phi)$, $\Kb_*$, and $\Kb_{**}$ to be the matrices satisfying $\{\Kb(\phi)\}_{ii'} = \kappa_f(\Tb_i, \Tb_{i'}) / \sigma^2$, $\{\Kb_*\}_{ij} = \kappa_f(\Tb_i, \Tb^*_j)/ \sigma^2$, and $\{\Kb_{**}\}_{jj'} = \kappa_f(\Tb_j^*, \Tb^*_{j'})/ \sigma^2$, where $i, i' \in \{1, \ldots, n\}$ and $j, j' \in \{1, \ldots, m\}$, and $\Cb_{yy} = \Kb(\phi) + \alpha \Ib$. If $\Lb$ is a lower triangular matrix such that $\Cb_{yy} = \Lb \Lb^\top$, then define
\begin{align}
  \label{eq:82}
  \tilde \yb = \Lb^{-1} \yb, \quad \tilde \Xb = \Lb^{-1} \Xb, \quad \tilde \etab = \Lb^{-1} \etab, \quad \hat \betab = (\tilde \Xb^\T \tilde \Xb)^{-1} \tilde \Xb^\top \tilde \yb, \quad \hat{\tilde \yb} = \tilde \Xb \hat \betab. 
\end{align}
The MCMC algorithm for posterior inference on $\betab, \fb^*, \Kb_{**}$ and for predicting $\yb^*$ cycles through the following four steps until convergence to its stationary distribution.
\begin{enumerate}
\item Draw $(\sigma^2, \betab)$ given $\yb, \phi, \alpha$ as 
  \begin{align}
    \label{eq:11}
    \sigma^2 \mid \yb, \phi, \alpha \sim  \frac{\|\tilde \yb - \hat{\tilde \yb} \|^2}{\chi^2_{n-p}}, \quad
    \betab \mid \sigma^2, \yb, \phi, \alpha \sim N\{\hat \betab, \sigma^2 (\tilde \Xb^\T \tilde \Xb)^{-1}\} ,
  \end{align}
  where $\chi^2_{n-p}$ is the $\chi^2$ random variable with $n-p$ degrees of freedom.  
\item Draw $(u_1, u_2)$  given $\yb, \sigma^2, \betab$ are drawn using ESS with the prior $(u_1, u_2) \sim N\{\zero, \diag(b_1, b_2)\}$ and the likelihood
  \begin{align}
    \label{eq:13}
    \ell(u_1, u_2) = \frac{1}{2\pi \sigma^2 [\text{det}\{ \Kb(\phi) +  \alpha \Ib\}]^{1/2}}e^{- \tfrac{1}{2 \sigma^{2}}(\yb - \Xb \beta)^\T \{\Kb(\phi) + \alpha \Ib\}^{-1} (\yb - \Xb \beta)}.
  \end{align}
  Define $\phi$ and $\alpha$ given $(u_1, u_2)$  as in \eqref{eq:81} and set $\tau^2 = \sigma^2 \alpha$.
\item Draw $\fb^*$ given  $\yb, \sigma^2, \betab, \phi, \alpha$ from $N(\mb_*, \Vb_*)$, where
  \begin{align}
    \label{eq:141}
    \mb_* = \Kb_*^\top \Cb_{yy}^{-1} (\yb - \Xb \betab), \quad 
    \Vb_* &= \sigma^2 \left( \Kb_{**} - \Kb_{*}^\top \Cb_{yy}^{-1} \Kb_{*} \right). 
  \end{align}
\item Draw $\yb^*$ given $\Xb^*, \betab, \fb^*, \tau^2$ from $N(\Xb^* \betab + \fb^*, \tau^2 \Ib)$.
\end{enumerate}

We collect MCMC draws of $\betab, \fb^*, \yb^*, \sigma^2, \phi, \tau^2$ post convergence. The derivation of this algorithm is in the supplementary materials along with the others. The algorithm in steps (1)--(4) a variant of Gibbs sampling algorithm for posterior inference in univariate spatial linear models in that we replace the Metropolis-Hastings step by an ESS step in (2) and $\Tb$ replaces a spatial location \citep{Banetal14}. Unlike the Metropolis-Hastings step, the ESS step is free of any proposal tuning, which is preferred in automated applications.  

\subsection{Classification Using Subsets of ICD Codes as Covariates}
\label{mdl-class}

The classification model is based on logistic regression. It assumes that 
\begin{align}
  \label{eq:log1}
  \log \, \frac{\Pr (y_i = 1)}{\Pr (y_i = 0)} = \xb_i^\T \betab + f(\Tb_i), \quad y_i \in \{0, 1\}, \quad i =1, \ldots, n,   
\end{align}
where $y_i$ is the $i$th response. The choice of priors on $\betab, f(\cdot)$ and the kernels remain the same as in \eqref{eq:8}, but the prior on $\thetab = (\phi, \sigma^2)^\top$ is assigned through $(u_1, u_2) \in \RR^2$ as 
\begin{align}
  \label{eq:81log}
  \phi = a_{\phi} + (b_{\phi} - a_{\phi}) / (1 + e^{-u_1}), \quad \sigma^2 = e^{u_2}, \quad (u_1, u_2)^\top \sim N\{\zero, \diag(b_1, b_2)\}. 
\end{align}
where $b_1, b_2 > 0$. We use P\'olya-Gamma data augmentation (PG-DA) for posterior inference on $\betab, \fb^*, \Kb_{**}$ and prediction of $\yb^*$ \citep{Poletal13}. This setup ensures that the MCMC algorithms for inference and predictions in classification and regression models are very similar. Specifically, we introduce P\'olya-Gamma random variables $\omega_1, \ldots, \omega_n$ specific to every observation such that $\omega_i$ are marginally distributed as PG(1, 0), where PG is the P\'olya-Gamma distribution with parameters $b=1$ and $c=0$, respectively. Define $\bar{\yb} = (y_1 - 1/2, \ldots, y_n - 1/2)^{\T}$, $n \times (p + n)$ matrix $\Ab = [\Xb \; \Ib]$, $(p+n) \times 1$ vector $\bb = (\betab, \fb)^\top$, $n \times 1$ vector $\omegab = (\omega_1, \ldots, \omega_n)^\T$, $n \times n$ diagonal matrix $\Omegab = \diag(\omegab)$, pseudo responses $\zb = \Omegab^{-1} \bar \yb$. Then, the conditional likelihood of $\bb$ given $\yb, \omegab$ and the associated model are defined as 
\begin{align}
  \label{eq:log2a}
  \ell (\etab \mid \yb, \omegab) \propto e^{ - \frac{1}{2} (\Ab \bb - \zb)^\top \Omega (\Ab \bb - \zb)}, \quad \zb = \Ab \bb + \epsilonb = \Xb \betab + \fb + \epsilonb, \quad \epsilonb \sim N(0, \Omegab^{-1}),
\end{align}
respectively. Furthermore, Theorem 1 in \citet{Poletal13} implies that $\omega_i$ given $\bb$ and the $i$th row of $A$, denoted as $\ab_i^\T$, follows PG(1, $|\ab_i^\T \bb|$).

The MCMC algorithm for posterior inference and predictions in \eqref{eq:log1} follows from arguments similar to those used in Section \ref{mdl-reg}. Marginalizing over $\fb$ in \eqref{eq:log2a} implies that $\zb $ given $\yb$, $\betab$, $\omegab$, $\phi, \sigma^2$ is distributed as $N(\Xb \betab, \Cb_{zz})$, where $\Cb_{zz} = \Kb_{\thetab} + \Omegab^{-1}$ and it plays the same role as $\Cb_{yy}$ in \eqref{eq:8}. Following \eqref{eq:82}, if $\Lb$ is a lower triangular matrix such that $\Cb_{zz} = \Lb \Lb^\top$, then define
\begin{align}
  \label{eq:821}
  \tilde \zb = \Lb^{-1} \zb, \quad \tilde \Xb = \Lb^{-1} \Xb, \quad \hat \betab = (\tilde \Xb^\T \tilde \Xb)^{-1} \tilde \Xb^\top \tilde \zb. 
\end{align}
The MCMC algorithm for posterior inference on $\betab, \fb^*, \Kb_{**}$ and for predicting $\yb^*$ cycles through the following six steps until convergence to its stationary distribution.
\begin{enumerate}
\item Draw $\betab$ given $\yb, \omegab, \phi, \sigma^2$ as $N\{\hat \betab, (\tilde \Xb^\T \tilde \Xb)^{-1}\}$.
\item Draw $(u_1, u_2)$  given $\yb, \omegab, \betab$ using ESS following \eqref{eq:13}, where the likelihood and prior for $(u_1, u_2)$ are defined in \eqref{eq:log2a} and \eqref{eq:81log}. Define $\phi$ and $\sigma^2$ given $(u_1, u_2)$ using \eqref{eq:81log}. 
\item Draw $\fb$ given $\yb, \omegab, \betab, \phi, \sigma^2$ as $N(\mb, \Vb)$, where
  \begin{align}
    \label{eq:log-mcmc-2}
    \Vb = \Kb_{\thetab} - \Kb_{\thetab} \Cb_{zz}^{-1} \Kb_{\thetab}, \quad
    \mb = \Vb \Omegab (\zb - \Xb \betab).
  \end{align}
\item Draw $\fb^*$ given  $\yb, \betab, \sigma^2, \phi$ from $N(\mb_*, \Vb_*)$, where
  \begin{align}
    \label{eq:14}
    \mb_* = \Kb_{\thetab *}^\top \Cb_{zz}^{-1} (\zb - \Xb \betab), \quad 
    \Vb_* &= \Kb_{\thetab **} - \Kb_{\thetab *}^\top \Cb_{zz}^{-1} \Kb_{\thetab *} ,
  \end{align}
  where $\{\Kb_{\thetab *}\}_{ij} = \kappa_f(\Tb_i, \Tb^*_j)$, and $\{\Kb_{\thetab **}\}_{jj'} = \kappa_f(\Tb_j^*, \Tb^*_{j'})$, where $i \in \{1, \ldots, n\}$ and $j, j' \in \{1, \ldots, m\}$.
\item Draw $y^*_i$ given $\Xb^*, \betab, \fb^*$ from Bernoulli($p_i^*$) for $i = 1, \ldots, m$, where $p^*_i = e^{v_i} / (1 + e^{v_i} )$ and $v_i = \xb_i^{*\top} \betab + (\fb^*)_i$. 
\item Draw $\omega_i$ given $\Xb, \betab, \fb$ from PG$\{1, |\xb_i^\top \betab + (\fb)_i|\}$ for $i = 1, \ldots, n$.
\end{enumerate}
We collect MCMC draws of $\betab, \fb^*, \yb^*, \sigma^2, \phi$ post convergence.

Our MCMC algorithm is similar to other algorithms based on the PG-DA strategy. First, \citet[Section 5, Table 3]{Poletal13} present an application of the PG-DA strategy for nonparametric regression with a negative binomial response and a GP prior, but a linear predictor similar to $\xb_i^\top \betab$ in \eqref{eq:m1} is absent in their model. Second, \citet{WanRoy18b} develop a sampling algorithm based on the PG-DA strategy for posterior inference in a Bayesian logistic linear mixed model with independent and Gaussian random effects and prove its geometric ergodicity. Our MCMC algorithm is similar to theirs in that $f(\cdot)$ plays the role of random effects and the GP prior on $f(\cdot)$ induces dependence among patients that have similar diagnoses. Overall, the MCMC algorithm above broadens the range of application of the PG-DA strategy. 

\subsection{Theoretical Properties}
\label{sub-sec:theory}

The convergence rates of the posterior distributions of $f(\cdot)$ in Sections \ref{mdl-reg} and \ref{mdl-class} follow from the general theoretical setup for GP regression and classification \citet[Section 11.4.4]{GhoVan17}. Our focus is on the stationary GPs with the Mat\'ern and SE covariance kernels because $\kappa_\gamma^e$ is based on them. For stationary GP priors, the posterior convergence rates are described using  tail-decay properties of the \emph{spectral measure} associated with the covariance kernel and \emph{regularity} of the true $f(\cdot)$. In this section, we adapt existing results for our feature maps $\overline \Psib_{\wb, \Rb}$ assuming that $\wb$ and $\Rb$ are known.

Consider the Mat\'ern GP prior and its spectral measure. Following \citet[Page 84]{RasWil06}, if $d = C|t^*||\Bcal|$, then the Mat\'ern covariance kernel with smoothness and range parameters $\nu$ and $\phi$, respectively, is defined through its spectral measure $\lambda$ on $\RR^d$ as
\begin{align}
  \label{eq:sp-mdl}
  \kappa_{\text{Mat}}(\hb \mid \nu, \phi) &= \int_{\RR^{d}} e^{- \iota 2 \pi  \hb^\top \sbb} d\lambda(\sbb), \quad \hb \in \RR^{d}, \quad \nu > 0, \quad \phi > 0,\quad \iota = \sqrt{-1}, \nonumber\\
  d\lambda(\sbb) &= \frac{2^d \pi^{d/2}\Gamma(\nu + d/2)}{\Gamma(\nu)} (2\nu \phi^{2})^{\nu} \left( 2 \nu \phi^{2} + 4 \pi^2 \| \sbb \|^2 \right)^{-\nu - d/2} d \sbb, \quad \sbb \in \RR^d, \nonumber\\
                   \kappa_{\text{Mat}}(\hb \mid \nu, \phi)  &= \frac{2^{1-\nu}}{\Gamma(\nu)} \left( \sqrt{2\nu} \phi\| \hb \| \right)^{\nu} \Kcal_{\nu} \left(\sqrt{2 \nu} \phi\| \hb \| \right),              
\end{align}
where $\Kcal_{\nu} $ is the modified Bessel function of second kind. If $\nu \in \NN$, then sample paths of the Mat\'ern GP have partial derivatives up to order $\nu$ and all of them are square integrable. 

The GP prior in Theorem \ref{thm1} is indexed by the feature vectors $\overline \Psib_{\wb,\Rb}(\Tb) \in [0, 1]^d$ for $\Tb \in (2^{\Tcal})^C$. This is equivalent to embedding the string $\Tb$ in $[0, 1]^d$ using the feature map. The GP covariance kernels are defined using this embedding. The equivalent of Mat\'ern covariance kernel on $(2^{\Tcal})^C \times (2^{\Tcal})^C$ with variance $\sigma^2$ is defined through $\overline \Psib_{\Tb} ,  \overline \Psib_{\Tb'}$ as $\sigma^2 \kappa_{\text{Mat}}(\overline \Psib_{\Tb} - \overline \Psib_{\Tb'} \mid \nu, \phi)$, where $\overline \Psib_{\Tb}$ denotes $\overline \Psib_{\wb,\Rb}(\Tb)$. If $\nu=1/2$, then this kernel equals $\kappa_1^e$. Taking point-wise limit of $\sigma^2 \kappa_{\text{Mat}}(\overline \Psib_{\Tb} - \overline \Psib_{\Tb'} \mid \nu, \phi)$ as $\nu \rightarrow \infty$, we obtain the equivalent of SE covariance kernel on $(2^{\Tcal})^C \times (2^{\Tcal})^C$ as $\sigma^2 \exp({ - \phi^2 \| \overline \Psib_{\Tb} - \overline \Psib_{\Tb'}\|^2 / 2 })$. 


We now define the regularity of functions using three function spaces. Let $\lfloor \nu \rfloor$ denote the largest integer less than or equal to $\nu$ and $\check g(\sbb)$ denote the Fourier transform of  $g:\RR^d \rightarrow \RR$. First, the H\"older space $\mathfrak{C}^{\nu}[0, 1]^d$ is the space of all functions on $[0, 1]^d$ whose partial derivatives of order $(k_1, \ldots, k_d)$ exist for all nonnegative integers $k_1, \ldots, k_d$ such that $k_1 + \ldots + k_d \leq \lfloor  \nu \rfloor $ and whose $\lfloor  \nu \rfloor$-th derivative is Lipschitz of order $\nu - \lfloor  \nu \rfloor$.
Second, the Sobolev space $\mathfrak{H}^{\nu}[0, 1]^d$ is space of all functions on $[0, 1]^d$ that are the restrictions of a function $g$ for which $\int (1 + \| \sbb\|^2)^{\nu} |\check g(\sbb)|^2 d \sbb $ is finite. Finally, the space of infinitely smooth functions on $[0,1]^d$ are restrictions of a function $g$ on  $[0,1]^d$ for which $\int \exp (\gamma \| \sbb \|^r) |\check g(\sbb)|^2 d \sbb $ is finite and is denoted as $\mathfrak{A}^{r,\gamma}[0,1]^d$ with  $r \geq 1$ and $\gamma > 0$. A function is $\nu$-regular and $\infty$-regular on $[0, 1]^d $ if it belongs to  $\mathfrak{C}^{\nu}[0, 1]^d \cap \mathfrak{H}^{\nu}[0, 1]^d$ and $\mathfrak{A}^{r,\gamma}[0, 1]^d$ for some $r \geq 1$ and $\gamma > 0$, respectively.

We introduce some notations for stating the theoretical results.  Reformulating the models in Sections \ref{mdl-reg} and \ref{mdl-class} in terms of feature maps $\Psi \in [0, 1]^d$, the population versions of \eqref{eq:m1} and \eqref{eq:log1} after setting $\betab = \zero$ are 
\begin{align}
  \label{eq:th-mdl}
  y = \mu_0(\Psi) + \epsilon, \; \epsilon \sim N(0, \tau^2_0), \; \tau^2_0 > 0, \quad \log \, \frac{\Pr (y = 1)}{\Pr (y = 0)} = \mu_0(\Psi),  \quad 
\end{align}
respectively, where $\tau_0^2$ is the error variance, any $\Tb \in (2^{\Tcal})^C$ is embedded in $[0, 1]^d$ as $\Psi = \overline \Psib_{\wb, \Rb}(\Tb)$, and the effect of $\Tb$ on $y$ is modeled through the feature map $ \overline \Psib_{\wb, \Rb}(\Tb)$, which is fixed for a given $\wb, \Rb$. The model is \eqref{eq:th-mdl} is well-defined because $\Psi$ is defined in Theorem \ref{thm1} and is fixed if $\wb, \Rb$ are known. The true relation between the feature map and response is denoted as $\mu_0(\cdot)$, which maps  $[0, 1]^d$ to $\RR$. The training data $y_1, \ldots, y_n$ are independently generated following \eqref{eq:th-mdl} in the two models for fixed $\Tb_1, \ldots, \Tb_n$ that have been determined apriori. Every  $\Tb_i$ is embedded in $[0, 1]^d$ as $\Psi_i = \overline \Psib_{\wb, \Rb}(\Tb_i)$ and \eqref{eq:th-mdl} is satisfied for all the $n$ embeddings. The true distribution of $y_i$ is $N \{\mu_0(\Psi_i), \tau_0^2\}$ and Bernoulli[$1 / \{1 + e^{-\mu_0(\Psi_i)}\}$] in \eqref{eq:m1} and \eqref{eq:log1}, respectively. The $n$-fold product of these distributions is denoted as $\PP_0^n$ in both models and $\yb \sim \PP_0^n$. The GP prior on $f(\cdot)$ in \eqref{eq:m1} and \eqref{eq:log1} is induced using the Mat\'ern and SE covariance kernels if $\mu_0(\cdot)$ in \eqref{eq:th-mdl} has a finite or infinite regularity, respectively. The posterior distributions of $f$ given $\yb$ are denoted as $\Pi_{\text{Mat}, n}^{\text{Reg}}$, $\Pi_{\text{SE}, n}^{\text{Reg}}$ and $\Pi_{\text{Mat}, n}^{\text{Clas}}$, $\Pi_{\text{SE}, n}^{\text{Clas}}$ in  \eqref{eq:m1} and \eqref{eq:log1} depending on the GP prior.

The convergence rates are described in terms of posterior probabilities assigned to small neighborhoods of $\mu_0$ as $n \rightarrow \infty$. Using  \eqref{eq:th-mdl} and the embedding of $(2^{\Tcal})^C$ in $[0, 1]^d$ based on the feature map $\overline \Psib_{\wb, \Rb}$, the distance between $f(\cdot)$ and $\mu_0(\cdot)$ is defined as the empirical norm
\begin{align}
  \label{eq:th-met}
  \| f - \mu_0 \|_n =  \left( \frac{1}{n} \sum_{i=1}^n \left[ f \left\{ \overline \Psib_{\wb, \Rb}(\Tb_i) \right\} - \mu_0\left\{ \overline \Psib_{\wb, \Rb}(\Tb_i) \right\}  \right]^2 \right)^{1/2} , 
\end{align}
where $\| \cdot \|_n$ conditions on the covariates $\Tb_1, \ldots, \Tb_n$ and its dependence on $\wb$ and $\Rb$ is suppressed. Given $\wb, \Rb$, the true and estimated effects of $\Tb_i$ on $y_i$ are  $\mu_0\left\{ \overline \Psib_{\wb, \Rb}(\Tb_i) \right\} $ and $f \left\{ \overline \Psib_{\wb, \Rb}(\Tb_i) \right\} $, respectively. The next theorem quantifies the rate of convergence of the posterior distribution in terms of $\| \cdot \|_n$ norm. 
\begin{theorem}\label{thm2}
  Assume that the regression and classification models defined in \eqref{eq:th-mdl} are true, $\sigma$ and $\phi$ are known, and the Mat\'ern covariance kernel has smoothness $\nu_1$. 
  \begin{enumerate}
  \item If $\mu_0$ is $\nu_2$-regular in the sense that  $\mu_0 \in \mathfrak{C}^{\nu_2}[0, 1]^d \cap \mathfrak{H}^{\nu_2}[0, 1]^d$ with $\nu_2 \leq \nu_1$, then as $n \rightarrow \infty$ there are positive constants $C_1, C_2$ such that 
    \begin{align}
      \label{eq:mat-rate}
      \Pi_{\text{Mat}, n}^{\text{Reg}} \left\{ f : \| f - \mu_0 \|_{n}  > C_1 \epsilon_n \mid \yb \right\} \rightarrow 0, \quad
      \Pi_{\text{Mat}, n}^{\text{Clas}} \left\{ f : \| f - \mu_0 \|_{n} > C_2 \epsilon_n \mid \yb \right\} \rightarrow 0
    \end{align}    
    in $\PP_0^n$-probability for every $\wb, \Rb$ with $\epsilon_n = n^{- \nu_2 / (2 \nu_1 + d)}$. 
  \item If $\mu_0$ is infinitely smooth functions in the sense that  $\mu_0 \in \mathfrak{A}^{r,\gamma}[0, 1]^d$ for some $r \geq 1$ and $\gamma > 0$, then as $n \rightarrow \infty$
    there are positive constants $C_1, C_2$ such that 
    \begin{align}
      \label{eq:SE-rate}
      \Pi_{\text{SE}, n}^{\text{Reg}}  \left\{ f : \| f - \mu_0 \|_{n}  > C_1 \epsilon_n \mid \yb \right\} \rightarrow 0, \quad 
      \Pi_{\text{SE}, n}^{\text{Clas}} \left\{ f : \| f - \mu_0 \|_{n} > C_2 \epsilon_n \mid \yb \right\} \rightarrow 0
    \end{align}
    in $\PP_0^n$-probability for every $\wb, \Rb$ with $\epsilon_n = n^{-1/2}(\log n)^{\max (1/r, 1/2 + d /4 )} $. 
  \end{enumerate}     
\end{theorem}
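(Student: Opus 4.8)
The plan is to reduce all four statements to the general theory of posterior contraction for Gaussian process priors developed in \citet[Section 11.4.4]{GhoVan17}, together with the spectral computations in \eqref{eq:sp-mdl}. Because $\sigma$ and $\phi$ are held fixed, the prior on $f$ is a pure mean-zero Gaussian process indexed by $[0,1]^d$ through the embedding $\Tb \mapsto \overline\Psib_{\wb,\Rb}(\Tb) \in [0,1]^d$, and the entire analysis is governed by its \emph{concentration function}
\[
\varphi_{\mu_0}(\epsilon) = \inf_{h \in \HH:\, \|h - \mu_0\|_\infty \le \epsilon} \tfrac{1}{2}\|h\|_{\HH}^2 \;-\; \log \PP\!\left( \|W\|_\infty \le \epsilon \right),
\]
where $W$ is a sample path, $\HH$ is the associated reproducing kernel Hilbert space, and $\|\cdot\|_\infty$ is the uniform norm on $[0,1]^d$. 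The contraction rate is then, up to a logarithmic factor, the solution $\epsilon_n$ of $\varphi_{\mu_0}(\epsilon_n) \le n\epsilon_n^2$.

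First I would identify the two RKHSs and evaluate $\varphi_{\mu_0}$. For the Mat\'ern kernel with smoothness $\nu_1$ the spectral density in \eqref{eq:sp-mdl} decays polynomially, so $\HH$ is norm-equivalent to the Sobolev space of order $\nu_1 + d/2$; the small-ball exponent satisfies $-\log \PP(\|W\|_\infty \le \epsilon) \asymp \epsilon^{-d/\nu_1}$, while for a $\nu_2$-regular truth a scale-$\delta$ mollification argument gives a decentering term of order $\epsilon^{-(2\nu_1 - 2\nu_2 + d)/\nu_2}$, which dominates; solving $\varphi_{\mu_0}(\epsilon_n) \asymp n\epsilon_n^2$ then yields $\epsilon_n = n^{-\nu_2/(2\nu_1 + d)}$, exactly as claimed. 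For the SE kernel the spectral density decays exponentially, $\HH$ consists of analytic functions, the small-ball exponent is polylogarithmic, and for $\mu_0 \in \mathfrak{A}^{r,\gamma}[0,1]^d$ the decentering is controlled by the analytic approximation bound; balancing gives $\epsilon_n = n^{-1/2}(\log n)^{\max(1/r,\,1/2 + d/4)}$. The hypotheses $\mu_0 \in \mathfrak{C}^{\nu_2}\cap\mathfrak{H}^{\nu_2}$ with $\nu_2 \le \nu_1$, and $\mu_0 \in \mathfrak{A}^{r,\gamma}$, are precisely what make these decentering bounds available, which is why those function spaces were introduced above.

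With the concentration function in hand, I would feed its bounds into the two generic contraction theorems. For the fixed-design Gaussian regression model in \eqref{eq:th-mdl}, the natural statistical distance is exactly the empirical norm $\|f - \mu_0\|_n$, so the prior-mass, sieve, and entropy conditions convert the estimate on $\varphi_{\mu_0}$ directly into the regression conclusions of \eqref{eq:mat-rate} and \eqref{eq:SE-rate}; here one uses $\|\cdot\|_n \le \|\cdot\|_\infty$ so that the sup-norm prior-mass lower bound is more than enough, while the testing and entropy estimates are carried out in the empirical norm on a sieve $\{\|f\|_\infty \le M_n\}$ furnished by the Borell--TIS inequality. For the logistic classification model I would invoke the binary-regression version of the same theorem; the only additional ingredient is that, on the sieve where sample paths are uniformly bounded, the logistic link has derivative bounded away from $0$ and $\infty$, so the squared Hellinger and Kullback--Leibler separations of the induced Bernoulli likelihoods are sandwiched between constant multiples of $\|f - \mu_0\|_n^2$; this yields the classification conclusions with the same $\epsilon_n$.

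The main obstacle is precisely this last translation. The concentration-function bounds are stated in the uniform norm on the ambient cube $[0,1]^d$, whereas the theorem's conclusion is phrased in the fixed-design empirical norm $\|\cdot\|_n$ evaluated only at the embedded points $\overline\Psib_{\wb,\Rb}(\Tb_i)$, and in the classification case it must further be reconciled with the non-Gaussian Bernoulli distances. The care required is to restrict to a high-probability sieve on which $\|f\|_\infty$ is controlled, so that the logistic link is bi-Lipschitz there and the empirical norm is dominated by the sup norm, and then to verify that the resulting rate holds uniformly over every fixed configuration $\Tb_1, \ldots, \Tb_n$ and over every $\wb, \Rb$, since the embedding is held fixed throughout. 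Once these norm equivalences are secured, the four displays in \eqref{eq:mat-rate} and \eqref{eq:SE-rate} follow from the cited general theorems.
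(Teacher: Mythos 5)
Your proposal is correct and takes essentially the same route as the paper: the paper's own proof consists of observing that the embedding $\Tb \mapsto \overline \Psib_{\wb, \Rb}(\Tb) \in [0,1]^d$ reduces both models to the general Gaussian-process contraction framework of \citet{GhoVan17} (Theorems 11.22 and 11.23, with Lemmas 11.36--11.37 supplying the Mat\'ern rate and Lemmas 11.38 and 11.41 the SE rate), which is exactly the reduction you perform. The only difference is one of detail, not of method: you unpack the concentration-function, RKHS, small-ball, and sieve arguments that sit inside the cited lemmas and theorems, whereas the paper delegates all of that to the reference.
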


The proof of Theorem \ref{thm2} is based on the standard setup \citep[Theorems 11.23 and 11.22]{GhoVan17}. The theorem is also applicable to the case when $\betab $ is non-zero if we absorb $\xb$ as a part of the $\overline \Psib_{\Tb}$ vector. Specifically, we scale the covariates $\xb, \xb'$ so that they lie in $[0, 1]^{p}$, redefine $d $ to be $p + C|t^*||\Bcal|$, and set the Mat\'ern and SE kernels to be $\sigma^2 \kappa_{\text{Mat}}(\overline \Psib_{\Tb} - \overline \Psib_{\Tb'} \mid \nu, \phi) \cdot  \kappa_{\text{Mat}}(\xb - \xb' \mid \nu, \phi)$ and $\sigma^2 \exp({ - \phi^2 \| \overline \Psib_{\Tb} - \overline \Psib_{\Tb'}\|^2 / 2 }) \cdot \exp({ - \phi^2 \| \xb - \xb' \|^2 / 2 })$, respectively, for any $(\xb, \Tb), (\xb', \Tb')$ in $[0,1]^{p} \times (2^{\Tcal})^C$. 

\section{Experiments}

\subsection{Setup}
\label{sec:setup}

We evaluate the performance of GP-based classification in \eqref{eq:log1} using three simulation studies and an analysis of the UIHC EHR data. In this section, covariance kernel of the GP is SE and the focus is on classification because our main goal is to address questions relevant to the UIHC EHR data; supplementary materials contain the results for GP-based regression using \eqref{eq:m1}. In the simulated and real data analyses, the cutoff for predicting the response is estimated using Receiver Operating Characteristics (ROC) curve. We compare the predictive performance of the all the methods using accuracy, area under the ROC curve (AUC), sensitivity, and specificity on the test data.

We compare GP-based classification with parametric and nonparametric models. The parametric competitors are logistic regression and its regularized versions that penalize the regression coefficients using the lasso and ridge penalties. The covariates in these models are dummy variables that indicate the presence of ICD codes in a patient's diagnosis. Because the SE kernel in \eqref{eq:7} is a restriction of the boundrange kernel, we use support vector machine (SVM) and kernel ridge regression (KRR) with the spectrum and boundrange kernels as the nonparametric competitors. We also use the kernel function in \eqref{eq:7} for defining feature vectors. Set $\phi=1, \sigma^2=1$ in the kernel matrix $\Kb_{\thetab}$. If $\Ub \Db \Ub^\top$ is the singular value decomposition of $\Kb$, then the first $r$ columns of $\Ub \Db^{1/2}$ are the feature vectors that are used as covariates in random forest, logistic regression, and its penalized extensions, where $r$ is selected using the scree plot.

The experiments are performed in R. We used glmnet \citep{Frietal09} for regularized logistic regression, ranger \citep{Wright2017} for random forest, and  kernlab \citep{Karetal04} for SVM and KRR. The performance metrics are evaluated using the pROC package \citep{Xavetal11}. The tuning parameters in all the methods are selected using the recommended settings. We use the spectrum and boundrange kernels in kernlab with 3 and 4 as the  length of the matching substring, respectively. This means that the matches in the two kernels include the first three and four contiguous characters in a pair of ICD codes, which are also the most informative \citep{Caletal17}; therefore, we expect the results for kernlab and our method to be very similar. We use ESS by setting $a_{\phi}$, $b_{\phi}$, $b_1$, and $b_2$ in \eqref{eq:81log} to be 0,5, 1, and 2, respectively. The MCMC algorithms run for 10,000 iterations. The draws from the first 5,000 iterations are discarded as burn-ins, and every fifth draw in the remaining chain is chosen for posterior inference and prediction. 

\subsection{Simulated data analysis}
\label{sec:simul-data-analys}

We present three simulation studies. The first two are based on simple parametric models, whereas the third is based on a nonparametric model based on the UIHC EHR data.  We expect all methods to have comparable performances in the first two simulations and differ significantly in the third one. The three simulations are replicated ten times.

\emph{\textul{First two simulations.}} Both simulations have four chronic conditions. The first chronic condition is denoted using the codes in $\{A, B, AA, BB, BA, AB\}$, where every code denotes a ``diagnosis''. Replacing $(A, B)$  with $(C, D)$, $(E, F)$, and $(G, H)$ in this set gives the set of codes for the second, third, and fourth chronic conditions, respectively. The chronic conditions are further structured into two groups: the first group includes the first and third chronic conditions and the second group includes the remaining two. We assign nine codes to every patient, where six out of the nine codes are from the 12 codes defining first or second group of chronic conditions and the remaining three come from the other chronic condition group. The probability of $y=1$ is higher than that of $y=0$ if codes from the second chronic condition group are in majority and vice-versa.

The first simulation study is based on logistic regression. Let $z_{ij}^{(c)}$ be a dummy variable indicating the presence of the $j$th code in the $c$th chronic condition in the $i$th patient ($c=1, \ldots, 4$; $j=1, \ldots, 6$; $i = 1, \ldots, n+m$), where codes follow the dictionary order, $n$ is the training data size, and $m$ testing data size. The first and second group dummy variables for the $i$th patient are $\{z_{ij}^{(c)}: j=1, \ldots, 6; c = 1, 3\}$ and $\{z_{ij}^{(c)}: j=1, \ldots, 6; c = 2, 4\}$. The response $y_i$ is simulated independently from Bernoulli ($p_i$), where
\begin{align}
  \label{eq:sim11}
  \log \frac{p_i}{1 - p_i} = 0.1 +  0.2 x_i + \sum_{c=1}^4   \sum_{j=1}^6 (-1)^c  z_{ij}^{(c)}, \quad x_i \overset{\text{ind.}}{\sim} N(0,1)  
\end{align}
for $i = 1, \ldots, n+m$. The second simulation is a slight modification of the first. Dropping the $\xb_i^\top \betab$ term in \eqref{eq:sim11}, we modify it to only include the interaction of $\{AB, BA\}$, $\{CD, DC\}$, $\{EF, FE\}$, and $\{GH, HG\}$, respectively, as
\begin{align}
  \label{eq:sim21}
  \log \frac{p_i}{1 - p_i} = 2 \left( z_{i \text{EF}}^{(2)} \cdot z_{i \text{FE}}^{(2)}  + z_{i \text{GH}}^{(4)} \cdot z_{i \text{HG}}^{(4)} \right) - \left( z_{i \text{AB}}^{(1)} \cdot z_{i \text{BA}}^{(1)} + z_{i \text{CD}}^{(3)} \cdot z_{i \text{DC}}^{(3)} \right),  
\end{align}
and $y_i = 1(p_i > 0.5)$  for $i = 1, \ldots, n+m$.

In first two simulations, GP with SE covariance kernel is among the best performers (Table \ref{tab:sim1}). The first simulation has a linear decision boundary in terms of the 24 codes. On the other hand, the second simulation has a quadratic decision boundary depending on the interactions between $\{AB, BA\}$, $\{CD, DC\}$, $\{EF, FE\}$, and $\{GH, HG\}$. Logistic regression and its regularized versions perform the best in the first simulation, but their performance deteriorates slightly in the second simulation due to the presence of interactions between dummy variables. Random forest using the codes as covariates has the worst performance in both simulations. After including SE kernel-based covariates, random forest achieves the same level of performance as other method in both simulations. The remaining methods are based on non-linear kernels, so they are more flexible in adapting to both linear and non-linear decision boundaries. Compared to the spectrum and boundrange kernels, the SE kernel in \eqref{eq:7} is better tuned for modeling the extra structure provided by the chronic conditions. This implies that kernel-based methods perform well in both simulations and that the GP with SE kernel performs better than SVM and KRR with spectrum and boundrange kernels.

\emph{\textul{Third simulation.}} For this simulation, we used ICD codes of patients in the UIHC EHR data with brain and other nervous system or breast as the cancer primary sites. For the $i$th (pseudo) patient in this subset, let 
$z_i = \| \Psib (\Tb_i) \|^2$ and $\delta_i=1$ if the patient with $\Tb_i$ code has brain and other nervous system cancer and $\delta_i=-1$ otherwise, where $\Psib (\Tb_i)$ is defined in \eqref{eq:3}. The response $y_i$ is simulated independently from Bernoulli ($p_i$), where
\begin{align}
  \label{eq:sim31}
  \log \frac{p_i}{1 - p_i} = 0.1 +  0.2 x_i + 3\tan(z_i) + 3\delta_i, \quad x_i \overset{\text{ind.}}{\sim} N(0,1) 
\end{align}
for $i = 1, \ldots, n+m$. We simulate the data after setting $n=1000$ and $m=100$. 

GP with SE covariance kernel is still among the top performers (Table \ref{tab:sim1}). This simulation has a nonlinear periodic decision boundary; therefore, the performance of random forest and parametric models, including logistic regression and its regularized, deteriorate significantly. While all the kernel-based methods are suited for modeling the non-linear periodic decision boundary in \eqref{eq:sim31}, the SE kernel is better tuned than spectrum and boundrange kernels for modeling the hierarchical structure of ICD codes. This implies that all kernel-based methods perform well, but the GP with SE kernel performs better than SVM and KRR with spectrum and boundrange kernels.

\emph{\textul{Summary.}} Our simulation studies suggest that the kernel-based methods are better suited for applications in practice where we expect interactions among the diagnoses. Furthermore, GP with the SE kernel is easily extended to account for any biomedical information, is tuned for modeling the structure of ICD codes, and produces similar results as SVM or KRR with spectrum and boundrange kernels in the absence of any additional structure. The distinguishing feature of the proposed method is that the posterior MCMC draws can be used for quantifying uncertainty in predictions and parameter estimates, which is key in biomedical applications; therefore, we conclude that the GP with SE covariance kernel is better than other kernel based methods in quantifying uncertainty and in modeling the effect of chronic conditions on the response.

\begin{table}[ht]
  \caption{Performance comparisons for the three simulation studies. Every entry in the table is the average of its values across ten replications. SE-GP is the proposed method and SE kernel-based features are obtained from the kernel matrix estimated using the proposed method. If a method fails to produce results, then we indicate this using `-'.}
  \label{tab:sim1}  
  \centering
  {\tiny
    \begin{tabular}{|r|c|c|c|c|}
    \hline      
    \multicolumn{5}{|c|}{\textbf{FIRST SIMULATION}}  \\
    \hline
    & \textbf{AUC} & \textbf{Accuracy} & \textbf{Sensitivity} & \textbf{Specificity} \\ 
    \hline
    SE-GP & 0.96 & 0.96 & 0.96 & 0.96\\
    \hline
    & \multicolumn{4}{c|}{Logistic Regression with SE Kernel-Based Covariates} \\
    \hline 
    No Penalty & 0.95 & 0.95 & 0.96 & 0.95 \\
    Ridge Penalty & 0.95 & 0.96 & 0.96 & 0.95 \\
    Lasso Penalty & 0.95 & 0.95 & 0.96 & 0.95 \\
    \hline    
    & \multicolumn{4}{c|}{Logistic Regression} \\
    \hline 
    No Penalty & 0.96 & 0.96 & 0.96 & 0.96 \\ 
    Ridge Penalty & 0.96 & 0.96 & 0.96 & 0.96 \\ 
    Lasso Penalty & 0.96 & 0.96 & 0.96 & 0.96 \\ 
    \hline  
    & \multicolumn{4}{c|}{Kernel Ridge Regression}  \\
    \hline 
    Spectrum & 0.95 & 0.96 & 0.96 & 0.95 \\ 
    Boundrange & 0.95 & 0.95 & 0.96 & 0.95 \\ 
    \hline
    & \multicolumn{4}{c|}{Support Vector Machine}  \\
    \hline 
    Spectrum & 0.95 & 0.96 & 0.96 & 0.96 \\ 
    Boundrange & 0.95 & 0.95 & 0.96 & 0.95 \\
    \hline 
    & \multicolumn{4}{c|}{Random Forest}  \\
    \hline
    Default & 0.50 & 0.50 & 0.99 & 0.01 \\
    SE Kernel-Based Covariates & 0.95 & 0.95 & 0.96 & 0.95 \\ 
    \hline      
    \multicolumn{5}{|c|}{\textbf{SECOND SIMULATION}}  \\
    \hline
    & \textbf{AUC} & \textbf{Accuracy} & \textbf{Sensitivity} & \textbf{Specificity} \\ 
    \hline
    SE-GP & 1.00 & 0.99 & 1.00 & 0.98 \\ 
    \hline
    & \multicolumn{4}{c|}{Logistic Regression with SE Kernel-Based Covariates} \\
    \hline 
    No Penalty & 0.90 & 0.82 & 0.77 & 0.86 \\ 
    Ridge Penalty & 0.90 & 0.82 & 0.81 & 0.84 \\ 
    Lasso Penalty & 0.90 & 0.82 & 0.75 & 0.89 \\
    \hline    
    & \multicolumn{4}{c|}{Logistic Regression} \\
    \hline 
    No Penalty & 0.90 & 0.82 & 0.78 & 0.86 \\ 
    Ridge Penalty & 0.90 & 0.82 & 0.80 & 0.84 \\ 
    Lasso Penalty & 0.90 & 0.83 & 0.77 & 0.87 \\ 
    \hline  
    & \multicolumn{4}{c|}{Kernel Ridge Regression}  \\
    \hline 
    Spectrum & 0.92 & 0.86 & 0.86 & 0.86 \\ 
    Boundrange & 0.91 & 0.84 & 0.83 & 0.85 \\ 
    \hline
    & \multicolumn{4}{c|}{Support Vector Machine}  \\
    \hline 
    Spectrum & 0.92 & 0.86 & 0.86 & 0.86 \\ 
    Boundrange & 0.91 & 0.84 & 0.83 & 0.85 \\
    \hline        
    & \multicolumn{4}{c|}{Random Forest}  \\
    \hline
    Default & 0.50 & 0.45 & 1.00 & 0.00 \\
    SE Kernel-Based Covariates & 0.88 & 0.88 & 0.85 & 0.92 \\ 
    \hline
    \multicolumn{5}{|c|}{\textbf{THIRD SIMULATION}}  \\
    \hline
    & \textbf{AUC} & \textbf{Accuracy} & \textbf{Sensitivity} & \textbf{Specificity} \\ 
    \hline
    SE-GP & 0.94 & 0.91 & 0.90 & 0.92 \\ 
    \hline
    & \multicolumn{4}{c|}{Logistic Regression with SE Kernel-Based Covariates} \\
      \hline
    No Penalty & - & - & - & - \\             
    Ridge Penalty & 0.76 & 0.73 & 0.66 & 0.82 \\   
    Lasso Penalty & 0.75 & 0.73 & 0.67 & 0.81 \\ 
    \hline    
    & \multicolumn{4}{c|}{Logistic Regression} \\
      \hline
    No Penalty & - & - & - & - \\       
    Ridge Penalty & 0.89 & 0.84 & 0.85 & 0.84 \\  
    Lasso Penalty & 0.90 & 0.86 & 0.85 & 0.88 \\ 
    \hline  
    & \multicolumn{4}{c|}{Kernel Ridge Regression}  \\
    \hline 
    Spectrum & 0.84 & 0.80 & 0.82 & 0.78 \\ 
    Boundrange & 0.81 & 0.78 & 0.81 & 0.74 \\  
    \hline
    & \multicolumn{4}{c|}{Support Vector Machine}  \\
    \hline 
    Spectrum & 0.85 & 0.81 & 0.82 & 0.81 \\   
    Boundrange & 0.83 & 0.79 & 0.77 & 0.81 \\ 
    \hline
    & \multicolumn{4}{c|}{Random Forest}  \\
    \hline
    Default & 0.85 & 0.86 & 0.95 & 0.76 \\
    SE Kernel-Based Covariates & 0.89 & 0.89 & 0.91 & 0.88 \\
    \hline    
  \end{tabular}    
  }%
\end{table}

\subsection{Real data analysis}
\label{sec:real-data-analysis}

The UIHC EHR data contains information about 1660 patients, including their diagnoses, marital status,  and the primary cancer site. There are six types of cancer sites: brain and other nervous system (brain), breast, urinary system, respiratory system, female genital system, and digestive system. \citet{Caletal17} define 58 chronic conditions using ICD codes that do not include any diagnosis related to neoplasms. For the feature map in \eqref{eq:thm-feat} that defines the SE kernel, we set $C=58$, $\Rb = \Ib$, and $w_c \propto \kappa_1(\Tb_c, \Tb_c)$, where $\Tb_c$ is the set of ICD codes definining $c$th chronic condition. We include marital status as the only demographic predictor, so $p=2$ in \eqref{eq:log1}. The major and minor goals of the analysis are to estimate the marginal associations of the 58 chronic conditions with the cancer sites and to predict the cancer primary site using the patient diagnoses and marital status, respectively.

We use the methods from the previous section to achieve both goals. Five classification models are used for predicting the six primary sites using $y = 1$ to denote the presence of cancer in brain, breast, urinary system, respiratory system, and female genital system, respectively, in the five models. For every model, all the methods are trained on 80\% of the full data, their performance is evaluated on the excluded data, and the setup is replicated ten times.
There is no principled way of using chronic conditions as covariates in logistic regression and its penalized extension, so we only use KRR, SVM, and SE-GP for estimating the marginal associations between the 58 chronic conditions and primary cancer site. To this end, we include $\Tb_1, \ldots, \Tb_{58}$ as the diagnoses of 58 pseudo patients in the test data and summarize the marginal associations using the estimate of Pr$(y=1 \mid \Tb_c)$, where $\Tb_c$ is the diagnosis for the $c$th pseudo patient ($c=1, \ldots, 58$). Unlike SVM and KRR, SE-GP also provides 95\% credible intervals for the marginal associations using the MCMC draws of Pr$(y=1)$.

SE-GP is among the top performers in all the five classification models (Tables \ref{tab:r1}--\ref{tab:r5}). We do not provide results for KRR and logistic regression with no penalty because the former fails due to a line search error and the latter cannot be used because the number of dummy variables is larger than the sample size. \citet{Caletal17} have developed a comprehensive list of chronic conditions after a careful study, so we expect a subset of them to be related to primary cancer sites. This is confirmed by a relatively high accuracy of logistic regression with the ridge and lasso penalties for predicting the cancer in brain and urinary system; however, these methods do not use the information encoded in ICD codes, so their sensitivity and specificity vary a lot; for example, logistic regression with the lasso penalty has a high accuracy of 0.90 for predicting the cancer in urinary system, but its specificity in the same model is 0.36.

On the other hand, SE-GP is among the top performers in terms of accuracy, sensitivity, and specificity in all the five classification models. This also implies that the AUC of SE-GP is the highest among all methods in all the five classification models. The features obtained from the SE kernel matrix are also promising in that if we use them as covariates in logistic regression and its penalized extensions, then the AUCs are relatively large in all the five classification models. The performance of default random forest is the worst in all five classification models. AUC and accuracy increase after including SE kernel-based covariates, but the results are still poorer than all other methods.

SE-GP also performs better than SVM with spectrum and boundrange kernels because SE kernel accounts for the additional structure provided by the 58 chronic conditions. The SE-GP is a restriction of the boundrange kernel that is tuned for EHR data analysis, so its sensitivity and specificity are much higher than those of the SVM in all the five classification models. Most importantly, the estimates of marginal associations between chronic conditions and primary cancer sites agree closely with those obtained using the SVM with spectrum and boundrange kernels  (Table \ref{tab:c}). A key feature of the SE-GP approach is that it provides 95\% credible intervals in addition to the point estimates, which are important for characterizing uncertainty in biomedical applications.  Additionally, these credible intervals cover the corresponding estimates obtained using SVM. Based on our results in simulations and this section, we conclude that SE-GP outperforms its competitors in estimating the marginal associations among chronic conditions and primary cancer sites and in predicting the cancer primary site using diagnoses and demographic information as predictors.

\begin{table}[ht]
  \caption{Performance in the classification model with $y=1$ denoting the presence of cancer in the brain and other nervous system. Every entry in the table is the average of its values across ten replications. }
  \label{tab:r1}
  \centering
  {\tiny
  \begin{tabular}{|r|c|c|c|c|}
    \hline
    & \textbf{AUC} & \textbf{Accuracy} & \textbf{Sensitivity} & \textbf{Specificity} \\ 
    \hline
    SE-GP & 0.93 & 0.92 & 0.86 & 0.93 \\
    \hline
    & \multicolumn{4}{c|}{Logistic Regression with SE Kernel-Based Covariates} \\
    \hline 
    No Penalty & 0.91 & 0.88 & 0.82 & 0.89 \\ 
    Lasso Penalty & 0.91 & 0.89 & 0.81 & 0.91 \\ 
    Ridge Penalty &0.90 & 0.90 & 0.81 & 0.91 \\ 
    \hline
    & \multicolumn{4}{c|}{Logistic Regression} \\
    \hline 
    Lasso Penalty & 0.90 & 0.92 & 0.81 & 0.94  \\ 
    Ridge Penalty & 0.93 & 0.90 & 0.89 & 0.90 \\
    \hline  
    & \multicolumn{4}{c|}{Support Vector Machine}  \\
    \hline 
    Boundrange & 0.94 & 0.91 & 0.88 & 0.92 \\ 
    Spectrum & 0.94 & 0.90 & 0.90 & 0.90 \\
    \hline    
    & \multicolumn{4}{c|}{Random Forest}  \\
    \hline
    Default & 0.50 & 0.11 & 1.00 & 0.00 \\ 
    SE Kernel-Based Covariates & 0.72 & 0.92 & 0.45 & 0.98 \\ 
    \hline
  \end{tabular}    
  }%
\end{table}

\begin{table}[ht]
  \caption{Performance in the classification model with $y=1$ denoting the presence of cancer in the breast. Every entry in the table is the average of its values across ten replications. }
  \label{tab:r2}
  \centering
  {\tiny
  \begin{tabular}{|r|c|c|c|c|}
    \hline
    & \textbf{AUC} & \textbf{Accuracy} & \textbf{Sensitivity} & \textbf{Specificity} \\ 
    \hline
    SE-GP & 0.77 & 0.72 & 0.75 & 0.71 \\ 
    \hline
    & \multicolumn{4}{c|}{Logistic Regression with SE Kernel-Based Covariates} \\
    \hline 
    No Penalty & 0.69 & 0.54 & 0.80 & 0.51 \\ 
    Lasso Penalty & 0.68 & 0.59 & 0.74 & 0.56 \\ 
    Ridge Penalty &0.68 & 0.58 & 0.74 & 0.56 \\ 
    \hline
    & \multicolumn{4}{c|}{Logistic Regression} \\
    \hline 
    Lasso Penalty & 0.74 & 0.66 & 0.74 & 0.66 \\ 
    Ridge Penalty & 0.73 & 0.72 & 0.69 & 0.73 \\ 
    \hline  
    & \multicolumn{4}{c|}{Support Vector Machine}  \\
    \hline 
    Boundrange & 0.68 & 0.67 & 0.63 & 0.68 \\ 
    Spectrum & 0.74 & 0.67 & 0.75 & 0.66 \\
    \hline
    & \multicolumn{4}{c|}{Random Forest}  \\
    \hline
    Default & 0.50 & 0.11 & 1.00 & 0.00 \\
    SE Kernel-Based Covariates & 0.53 & 0.80 & 0.18 & 0.88 \\
    \hline
  \end{tabular}    
  }%
\end{table}

\begin{table}[ht]
  \caption{Performance in the classification model with $y=1$ denoting the presence of cancer in the urinary system. Every entry in the table is the average of its values across ten replications. }
  \label{tab:r3}
  \centering
  {\tiny
  \begin{tabular}{|r|c|c|c|c|}
    \hline
    & \textbf{AUC} & \textbf{Accuracy} & \textbf{Sensitivity} & \textbf{Specificity} \\ 
    \hline
    SE-GP & 0.80 & 0.78 & 0.71 & 0.79\\
    \hline
    & \multicolumn{4}{c|}{Logistic Regression with SE Kernel-Based Covariates} \\
    \hline 
    No Penalty & 0.75 & 0.68 & 0.71 & 0.68 \\ 
    Lasso Penalty& 0.75 & 0.70 & 0.70 & 0.70 \\ 
    Ridge Penalty& 0.75 & 0.75 & 0.63 & 0.77 \\ 
    \hline
    & \multicolumn{4}{c|}{Logistic Regression} \\
    \hline 
    Lasso Penalty & 0.69 & 0.89 & 0.36 & 0.97 \\ 
    Ridge Penalty & 0.74 & 0.76 & 0.68 & 0.77 \\
    \hline  
    & \multicolumn{4}{c|}{Support Vector Machine}  \\
    \hline 
    Boundrange & 0.60 & 0.71 & 0.43 & 0.75 \\ 
    Spectrum & 0.68 & 0.76 & 0.53 & 0.79 \\
    \hline
    & \multicolumn{4}{c|}{Random Forest}  \\
    \hline
    Default & 0.50 & 0.12 & 1.00 & 0.00 \\ 
    SE Kernel-Based Covariates & 0.58 & 0.89 & 0.17 & 0.98 \\ 
    \hline    
  \end{tabular}    
  }%
\end{table}

\begin{table}[ht]
  \caption{Performance in the classification model with $y=1$ denoting the presence of cancer in the respiratory system. Every entry in the table is the average of its values across ten replications. }
  \label{tab:r4}
  \centering
  {\tiny
  \begin{tabular}{|r|c|c|c|c|}
    \hline
    & \textbf{AUC} & \textbf{Accuracy} & \textbf{Sensitivity} & \textbf{Specificity} \\ 
    \hline
    SE-GP & 0.83 & 0.75 & 0.81 & 0.73 \\ 
    \hline
    & \multicolumn{4}{c|}{Logistic Regression with SE Kernel-Based Covariates} \\
    \hline
    No Penalty & 0.67 & 0.56 & 0.78 & 0.52 \\ 
    Lasso Penalty & 0.67 & 0.57 & 0.76 & 0.53 \\ 
    Ridge Penalty & 0.67 & 0.56 & 0.78 & 0.52 \\
    \hline
    & \multicolumn{4}{c|}{Logistic Regression} \\
    \hline 
    Lasso Penalty & 0.81 & 0.74 & 0.73 & 0.74 \\ 
    Ridge Penalty & 0.83 & 0.75 & 0.83 & 0.74 \\ 
    \hline  
    & \multicolumn{4}{c|}{Support Vector Machine}  \\
    \hline 
    Boundrange & 0.77 & 0.70 & 0.73 & 0.70 \\ 
    Spectrum & 0.82 & 0.75 & 0.78 & 0.74 \\
    \hline
    & \multicolumn{4}{c|}{Random Forest}  \\
    \hline
    Default & 0.50 & 0.16 & 1.00 & 0.00 \\ 
    SE Kernel-Based Covariates & 0.55 & 0.84 & 0.13 & 0.97 \\  
    \hline
  \end{tabular}    
  }%
\end{table}

\begin{table}[ht]
  \caption{Performance in the classification model with $y=1$ denoting the presence of cancer in female gential system. Every entry in the table is the average of its values across ten replications.  }
  \label{tab:r5}
  \centering
  {\tiny
  \begin{tabular}{|r|c|c|c|c|}
    \hline
    & \textbf{AUC} & \textbf{Accuracy} & \textbf{Sensitivity} & \textbf{Specificity} \\ 
    \hline
    SE-GP & 0.86 & 0.83 & 0.72 & 0.86 \\  
    \hline
    & \multicolumn{4}{c|}{Logistic Regression with SE Kernel-Based Covariates} \\
    \hline
    No Penalty & 0.78 & 0.73 & 0.66 & 0.75 \\ 
    Lasso Penalty &  0.78 & 0.75 & 0.65 & 0.78 \\ 
    Ridge Penalty & 0.78 & 0.75 & 0.66 & 0.78 \\
    \hline
    & \multicolumn{4}{c|}{Logistic Regression} \\
    \hline 
    Lasso Penalty & 0.84 & 0.77 & 0.75 & 0.77 \\ 
    Ridge Penalty & 0.85 & 0.81 & 0.77 & 0.82 \\ 
    \hline  
    & \multicolumn{4}{c|}{Support Vector Machine}  \\
    \hline 
    Boundrange & 0.77 & 0.75 & 0.68 & 0.77 \\ 
    Spectrum & 0.83 & 0.80 & 0.70 & 0.83 \\
    \hline
    & \multicolumn{4}{c|}{Random Forest}  \\
    \hline
    Default & 0.50 & 0.21 & 1.00 & 0.00 \\ 
    SE Kernel-Based Covariates & 0.70 & 0.84 & 0.45 & 0.95 \\   
    \hline
  \end{tabular}    
  }%
\end{table}
  
\begin{table}[ht]
  \caption{Summary of the marginal associations between 58 chronic conditions and the six primary cancer sites. The marginal association between a chronic condition and a primary cancer site denoted using $y=1$ is defined as the point estimate of Pr$(y=1 \mid \Tb_c)$, where $\Tb_c$ is the diagnosis of the pseudo patient representing the chronic condition $c$. For SE-GP, we define the point estimate of Pr$(y=1  \mid x^*)$ to be its posterior mean and use the 95\% credible interval to provide an estimate of uncertainty. For the 58 pseudo patients and six primary cancer sites, we obtain the point estimate and 95\% credible interval using the MCMC draws. The table includes only those chronic conditions for which the point estimates of the marginal associations are greater than 0.5 and a `-' indicates the absence of such chronic conditions.}
  \label{tab:c}  
  \centering
  {\tiny
  \begin{tabular}{|r|c|c|}
    \hline
    \textbf{Kernel} & \textbf{Chronic Condition} & \textbf{Estimate} \\ 
    \hline
    \multicolumn{3}{c|}{\textbf{Brain and Other Nervous System}} \\
    \hline
    SE-GP & OTHER NEUROLOGICAL DISEASES & 0.8075, (0.5199, 0.9649) \\ 
    Boundrange & OTHER NEUROLOGICAL DISEASES & 0.8057 \\ 
    Spectrum & OTHER NEUROLOGICAL DISEASES & 0.6570 \\
    \hline
    \multicolumn{3}{c|}{\textbf{Breast}} \\
    \hline
    SE-GP & DEPRESSION AND MOOD DISEASES & 0.6909, (0.2960, 0.9451) \\
    Boundrange & - & - \\ 
    Spectrum & OTHER CARDIOVASCULAR DISEASES & 0.5799 \\ 
    Spectrum & OSTEOPOROSIS & 0.5884 \\ 
    Spectrum & ALLERGY  & 0.6005 \\ 
    Spectrum & DEPRESSION AND MOOD DISEASES & 0.6454 \\ 
    Spectrum & VENOUS AND LYMPHATIC DISEASES & 0.7290 \\     
    \hline
    \multicolumn{3}{c|}{\textbf{Urinary System}} \\
    \hline
    SE-GP & - & -\\ 
    Boundrange & - & - \\ 
    Spectrum & - & - \\    
    \hline
    \multicolumn{3}{c|}{\textbf{Respiratory System}} \\
    \hline
    SE-GP & OTHER RESPIRATORY DISEASES & 0.7973, (0.5421, 0.9483) \\ 
    SE-GP & COPD, EMPHYSEMA, CHRONIC BRONCHITIS & 0.8815, (0.6181, 0.9880) \\ 
    Boundrange & OTHER RESPIRATORY DISEASES & 0.6901 \\ 
    Boundrange & COPD, EMPHYSEMA, CHRONIC BRONCHITIS & 0.7575 \\ 
    Spectrum & OTHER RESPIRATORY DISEASES & 0.5024 \\ 
    Spectrum & OTHER PSYCHIATRIC AND BEHAVIORAL DISEASES & 0.5871 \\ 
    Spectrum & COPD, EMPHYSEMA, CHRONIC BRONCHITIS & 0.8992 \\ 
    \hline
    \multicolumn{3}{c|}{\textbf{Female Genital System}} \\
    \hline
    SE-GP & OBESITY & 0.6616, (0.3507, 0.8915) \\ 
    Boundrange & OBESITY & 0.5276 \\
    Spectrum & - & - \\ 
    \hline
    \multicolumn{3}{c|}{\textbf{Digestive System}} \\
    \hline    
    SE-GP & COLITIS AND RELATED DISEASES & 0.5114, (0.1576, 0.8618) \\ 
    SE-GP & OTHER METABOLIC DISEASES & 0.5382, (0.2198, 0.8286) \\ 
    SE-GP & INFLAMMATORY BOWEL DISEASES & 0.5874, (0.2478, 0.8788) \\ 
    SE-GP & CHRONIC PANCREAS, BILIARY TRACT AND GALLBLADDER DISEASES & 0.7002, (0.3003, 0.9464) \\ 
    SE-GP & CHRONIC LIVER DISEASES &0.7893,  (0.4196, 0.9686) \\ 
    SE-GP & ESOPHAGUS, STOMACH AND DUODENUM DISEASES & 0.8413, (0.5631, 0.9754) \\ 
    Boundrange & OTHER METABOLIC DISEASES & 0.5736 \\ 
    Boundrange & INFLAMMATORY BOWEL DISEASES & 0.8326 \\ 
    Boundrange & OTHER DIGESTIVE DISEASES & 0.9170 \\ 
    Boundrange & CHRONIC PANCREAS, BILIARY TRACT AND GALLBLADDER DISEASES & 0.9281 \\ 
    Boundrange & COLITIS AND RELATED DISEASES & 0.9389 \\ 
    Boundrange & ESOPHAGUS, STOMACH AND DUODENUM DISEASES & 0.9949 \\ 
    Boundrange & CHRONIC LIVER DISEASES & 0.9958 \\ 
    Spectrum & OTHER METABOLIC DISEASES & 0.5052 \\ 
    Spectrum & OTHER DIGESTIVE DISEASES & 0.6525 \\ 
    Spectrum & COLITIS AND RELATED DISEASES & 0.6580 \\ 
    Spectrum & CHRONIC PANCREAS, BILIARY TRACT AND GALLBLADDER DISEASES & 0.7498 \\ 
    Spectrum & CHRONIC LIVER DISEASES & 0.8930 \\ 
    Spectrum & ESOPHAGUS, STOMACH AND DUODENUM DISEASES & 0.9323 \\     
    \hline
  \end{tabular}    
  }%
\end{table}

\section{Discussion}
\label{s:discuss}

Our approach can be extended in several ways. First, the sample size in the motivating application is small so that repeated computations and evaluation of the kernel function is relatively inexpensive. This, however, becomes problematic as the sample size becomes moderately large. The kernels developed in this work can be immediately extended for nonparametric regression and classification in large sample settings using low-rank kernels based on inducing points \citep{QuiRas05}. Second, Every element of the kernel matrix is a similarity measure between a pair of patients. This can be used an input in an algorithm for unsupervised learning using ICD codes as inputs, such as clustering of subsets, data visualization, and dimension reduction \citep[Chapters 6, 8]{ShaCri04}. The PG-DA strategy outlined in Section \ref{mdl-class} is easily extended to model zero-inflated negative binomial responses following \citet{Nee19}. Finally, We are exploring application of product partition model with regression on diagnoses, where the similarity measure on diagnoses is defined using the kernel matrix. 

\section*{Acknowledgement}

We thank UIHC for sharing the EHR data. Sanvesh Srivastava's research is supported by grants from the Office of Naval Research (ONR-BAA N000141812741) and the National Science Foundation (DMS-1854667/1854662). Stephanie Gilbertson-White's research is supported by the University of Iowa, College of Nursing Center on Advancing Multimorbidity Science (CAMS) funded be the National Institute for Nursing Research (P20NR018081). The R code for simulations is available at \url{https://github.com/blayes/ehr-data-analysis}.

\bibliographystyle{Chicago}
\bibliography{papers}

\newpage

\pagenumbering{gobble}
\pagenumbering{arabic}
\renewcommand*{\thesection}{\Alph{section}}
\setcounter{section}{0}

\section*{Supplemental Document for Gaussian Process Regression and Classification using International Classification of Disease Codes as Covariates}

\section{Proof of Theorem 1}
\label{asec:proof-theorem-1}

We recall the definitions of kernel functions on ICD codes. Let $\psib(t)$ be the feature map of an ICD code $t \in \Tcal$ and $\Fcal = \{\psib(t): t \in \Tcal\}$ be the feature space. For every $t \in \Tcal$, $\psib(t) \in \{0, 1\}^{|t^*||\Bcal|}$ and is defined as 
\begin{align}
  \label{aeq:1}
  \psib: \Tcal \mapsto \Fcal, \quad \psib(t) =  \{\psib_b(t)\}_{b \in \Bcal}, \quad \psib_b(t) = \{1(b = t_{1:1}), \ldots, 1(b = t_{1:|t^*|})\}^\T, 
\end{align}
where $1(\cdot)$ is 1 if $\cdot$ is true and 0 otherwise. The kernel function on $\Tcal \times \Tcal$ is defined using $\psib(\cdot)$ and  known $|t^*| \times |t^*|$ symmetric positive semi-definite matrices $(\Lambdab_b)_{b \in \Bcal}$ as 
\begin{align}
  \label{aeq:e1}
  \kappa_0(t, t' \mid \Lambdab) = \underset{b \in \Bcal} {\sum} \psib^\T_b(t) \Lambdab_b \psib_b(t') \equiv \psib^\T(t) \Lambdab \psib(t'), \; t, t' \in \Tcal, \quad \Lambdab = \diag\{\Lambdab_b: b \in \Bcal\},
\end{align}
where $\Lambdab$ is a block diagonal matrix.

We use $\kappa_0$ in \eqref{aeq:e1} to define the kernel function on subsets of ICD codes that represent diagnoses. Let $2^{\Tcal}$ be the power set of $\Tcal$, $\tb = \{t_1, \ldots, t_r \}$ be an element of $2^{\Tcal}$, $\Psi(\tb)$ be the feature map of $\tb$, and $\Fcal_\Psi = \{\Psib(\tb): \tb \in 2^{\Tcal}\}$ be the feature space of subsets of ICD codes. The feature vector $\Psib(\tb)$ and the kernel function $\kappa_1$ on $2^{\Tcal} \times 2^{\Tcal}$ are defined using $\psib(\cdot)$  in \eqref{aeq:1} and $\kappa_0(\cdot, \cdot \mid \Lambdab)$ in \eqref{aeq:e1}, respectively, as
\begin{align}
  \label{aeq:3}
  \Psib(\tb) = \sum_{{t \in \tb}} \psib(t) , \quad \kappa_1(\tb, \tb' \mid \Lambdab ) = \sum_{{t \in \tb}} \sum_{{t' \in \tb'}} \kappa_0(t, t' \mid \Lambdab) = \Psib^\T(\tb) \Lambdab \Psib(\tb') , \quad \tb, \tb' \in 2^{\Tcal},
\end{align}
where $\Psib(\tb)$ is a $(|t^*||\Bcal|)$-dimensional vector with entries in $\{0\} \cup \NN $ and $\Lambdab$ is defined as in \eqref{aeq:e1}.

We account the additional structure provided by the $C$ chronic conditions. A patient's diagnoses are represented using the set $\Tb = \{\tb_1, \ldots, \tb_c, \ldots, \tb_C\} \in \left( 2^{\Tcal} \right)^C $, where $\tb_c \subseteq \Tb_c$. A kernel function on $\left( 2^{\Tcal} \right)^C \times \left( 2^{\Tcal} \right)^C$ that accounts for the structure imposed by $C$ chronic conditions is
\begin{align}
  \label{aeq:4}
  \kappa_2(\Tb, \Tb' \mid \wb, \Lambdab) = \sum_{c=1}^C w_c \, \kappa_1(\tb_c, \tb'_c \mid \Lambdab), \quad \Tb, \Tb' \in \left( 2^{\Tcal} \right)^C,  \quad w_1, \ldots, w_C > 0, 
\end{align}
where $\Lambdab$ is defined in \eqref{aeq:e1}, $\wb = (w_1, \ldots, w_C)$, and $w_c$ indicates the importance of chronic condition $c$ in defining the similarity between two patients with diagnoses $\Tb$ and $\Tb'$, respectively. Finally, we normalize $\kappa_2$ to obtain our kernel function 
\begin{align}
  \label{aeq:7}
  \kappa (\Tb, \Tb' \mid \wb, \Lambdab) &= \frac{\kappa_2 (\Tb, \Tb' \mid \wb, \Lambdab)}{\left\{ \kappa_2 (\Tb, \Tb \mid \wb, \Lambdab)\right\}^{1/2} \left\{ \kappa_2 (\Tb', \Tb' \mid \wb, \Lambdab) \right\}^{1/2}}, \quad \Tb, \Tb' \in \left( 2^{\Tcal} \right)^C.
\end{align}

We restate Theorem 1. 
\begin{theorem}\label{athm1}
  Let $\{\Lambdab_b\}_{b \in \Bcal}$ and $\wb$ be defined as in \eqref{aeq:e1} and \eqref{aeq:4}, $\Tb, \Tb' \in (2^{\Tcal})^C$, $\sigma^2 > 0$, $\phi > 0 $, and 
  \begin{align}
    \label{eq:5}
    d(\Tb, \Tb' \mid \wb, \Lambdab) &= \left\{ \kappa(\Tb, \Tb \mid \wb, \Lambdab) + \kappa(\Tb', \Tb' \mid \wb, \Lambdab) - 2 \kappa(\Tb, \Tb' \mid \wb, \Lambdab) \right\}^{1/2}. 
  \end{align}
  Then,
  \begin{enumerate}
  \item given $\Lambdab$ and $\wb$, $\kappa(\cdot, \cdot \mid \wb, \Lambdab)$ in \eqref{aeq:7} and $d(\cdot, \cdot \mid \wb, \Lambdab)$ in \eqref{eq:5} are valid kernel and distance functions on $(2^{\Tcal})^C \times (2^{\Tcal})^C$; 
  \item if $\Lambdab = \Rb^{\top} \Rb$, where $\Rb$ is an upper triangular matrix, then the feature map of $\kappa$ is 
    \begin{align}
      \label{aeq:thm-feat}
      \overline \Psib_{\wb, \Rb} (\Tb) = \frac{\Psib_{\wb, \Rb} (\Tb)}{\| \Psib_{\wb, \Rb} (\Tb) \|}, \quad \Psib_{\wb, \Rb} (\Tb) = \{w_1^{1/2} \Rb \Psib(\tb_1), \ldots, w_C^{1/2} \Rb \Psib(\tb_C)\}^\T,
    \end{align}
    where $\Psib_{\wb, \Rb} (\Tb)$ is a $(C|t^*||\Bcal|)$-dimensional vector; and 
  \item the equivalents of polynomial kernel of order $s$ and  $\gamma$-exponential kernel on $(2^{\Tcal})^C \times (2^{\Tcal})^C$, respectively, are defined as
  \end{enumerate}  
  \begin{align}
    \label{aeq:thm-kern}
    \kappa_s(\Tb, \Tb' \mid \sigma^2, \wb, \Lambdab) &=  \sigma^2\left\{1 + \kappa(\Tb, \Tb' \mid \wb, \Lambdab) \right\}^s, \quad  s \in \NN,
                                                      \nonumber \\
    \kappa_\gamma^e(\Tb, \Tb'  \mid \sigma^2, \phi, \wb, \Lambdab) &= \sigma^2 e^{- \phi \, \{d(\Tb, \Tb' \mid  \wb, \Lambdab)\}^\gamma}, \quad \gamma \in [1, 2],
  \end{align}
  where $\sigma^2$ and $\phi$  play the role of variance and inverse length-scale parameters, respectively, and $\kappa_\gamma^e$ reduces to the exponential and SE kernels when $\gamma $ equals 1 and 2, respectively. 
\end{theorem}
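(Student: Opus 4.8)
The plan is to exhibit an explicit Euclidean feature map at each stage of the construction and then rely on the standard closure properties of positive semi-definite kernels, reserving the only genuinely analytic step for the $\gamma$-exponential family. First I would factor the block-diagonal matrix as $\Lambdab = \Rb^\T\Rb$ (possible since each $\Lambdab_b$ is positive semi-definite; upper triangular as in Part~2 when we wish to name the feature map) and observe that each building block is literally an inner product: $\kappa_0(t,t'\mid\Lambdab) = \langle \Rb\psib(t), \Rb\psib(t')\rangle$, hence $\kappa_1(\tb,\tb'\mid\Lambdab) = \langle \Rb\Psib(\tb), \Rb\Psib(\tb')\rangle$ because $\Psib(\tb)=\sum_{t\in\tb}\psib(t)$, and therefore $\kappa_2(\Tb,\Tb'\mid\wb,\Lambdab) = \langle \Psib_{\wb,\Rb}(\Tb), \Psib_{\wb,\Rb}(\Tb')\rangle$ with $\Psib_{\wb,\Rb}$ exactly the stacked, weight-scaled vector of \eqref{aeq:thm-feat}. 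Positive semi-definiteness at each level is then immediate: for any finite collection of points and coefficients $c_i$, the quadratic form collapses to $v^\T\Lambdab v\ge 0$ with $v$ a linear combination of feature vectors (for $\kappa_2$, a nonnegative $w_c$-weighted sum of such forms). This simultaneously settles the kernel half of Part~1 and yields Part~2, since dividing by $\kappa_2(\Tb,\Tb)^{1/2}\kappa_2(\Tb',\Tb')^{1/2}$ is exactly normalization of the feature map, giving $\overline\Psib_{\wb,\Rb}(\Tb)=\Psib_{\wb,\Rb}(\Tb)/\|\Psib_{\wb,\Rb}(\Tb)\|$ as the feature map of $\kappa$. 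I would flag the mild caveat that this needs $\kappa_2(\Tb,\Tb)>0$, i.e. $\Psib_{\wb,\Rb}(\Tb)\ne 0$, which holds whenever a patient carries at least one diagnosis not annihilated by $\Rb$.

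For the distance half of Part~1 I would exploit that $\kappa$ is normalized, so $\kappa(\Tb,\Tb)=\kappa(\Tb',\Tb')=1$ and hence $d(\Tb,\Tb'\mid\wb,\Lambdab)^2 = \|\overline\Psib_{\wb,\Rb}(\Tb)-\overline\Psib_{\wb,\Rb}(\Tb')\|^2$. Thus $d$ is simply the Euclidean distance between the embedded feature vectors, so nonnegativity, symmetry, and the triangle inequality are inherited from the ambient norm; it is a genuine metric on the feature space and a pseudometric on $(2^{\Tcal})^C$, since distinct code-sets can share a normalized feature vector.

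Part~3 then splits into an elementary piece and the main obstacle. For the polynomial kernel I would apply the usual closure rules: $1+\kappa$ is a kernel (the sum of $\kappa$ with the constant kernel), its $s$-th Hadamard power is a kernel by the Schur product theorem, and scaling by $\sigma^2>0$ preserves positivity, so $\kappa_s$ is a valid kernel with a finite-dimensional feature space.

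The hard part is the $\gamma$-exponential kernel $\sigma^2 e^{-\phi d^\gamma}$. Writing $u=\overline\Psib_{\wb,\Rb}(\Tb)$ and $v=\overline\Psib_{\wb,\Rb}(\Tb')$, validity reduces, via the embedding, to showing that $e^{-\phi\|u-v\|^\gamma}$ is positive definite on Euclidean space for every $\phi>0$ and every $\gamma\in[1,2]$. Here I would invoke Schoenberg's classical theorem: $\|u-v\|^\gamma$ is conditionally negative definite precisely for $0<\gamma\le 2$, and a function is conditionally negative definite if and only if $e^{-\phi(\cdot)}$ is positive definite for all $\phi>0$; the endpoints $\gamma=1$ (Laplacian) and $\gamma=2$ (Gaussian) are the familiar special cases that pin down the exponential and SE kernels. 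This Schoenberg/Lévy argument is the only step that goes beyond feature-map bookkeeping and standard kernel-closure properties, and I expect it to be the crux of the proof.
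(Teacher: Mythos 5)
Your proposal is correct and follows essentially the same route as the paper's proof: explicit feature maps built from the Cholesky factor $\Rb$ of $\Lambdab$, closure properties to pass from $\kappa_0$ to $\kappa_1$, $\kappa_2$, and the polynomial kernel, normalization of $\Psib_{\wb,\Rb}$ to obtain the feature map of $\kappa$, and identification of $d$ with the Euclidean distance between embedded feature vectors so that the $\gamma$-exponential kernel reduces to its Euclidean counterpart. The only differences are cosmetic refinements on your side: you justify positive definiteness of $e^{-\phi\|\cdot\|^\gamma}$ by invoking Schoenberg's theorem explicitly where the paper cites Section 4.2.1 of Rasmussen and Williams for the same fact, and your caveats that normalization requires $\Psib_{\wb,\Rb}(\Tb)\neq\zero$ and that $d$ is strictly a pseudometric on $(2^{\Tcal})^C$ are correct points the paper leaves implicit.
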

\begin{proof}  
  \underline{\textbf{Proof of (1):}} 
  Because $\Lambdab_b$ is a symmetric positive semi-definite matrix, then Cholesky decomposition implies that there exists an upper triangular matrix with a non-negative diagonal elements $\Rb_b$ such that $\Rb_b^\top \Rb_b = \Lambdab_b$. Let $\Rb = \diag\{\Rb_b: b \in \Bcal\}$ and $\Lambdab = \diag\{\Lambdab_b: b \in \Bcal\}$. Then, $\Rb^\top \Rb = \Lambdab$ and \eqref{aeq:e1} implies that
  \begin{align}
    \label{eq:pf-athm1-1}
    \kappa_0(t, t' \mid \Lambdab) = \psib(t)^\top \Rb^\top \Rb  \psib(t') = \left\{ \Rb  \psib(t) \right\}^\top \Rb  \psib(t') = \left\{ \Rb  \psib(t') \right\}^\top \Rb  \psib(t)  = \kappa_0(t', t \mid \Lambdab).
  \end{align}
  For any $t_1, \ldots, t_n \in \Tcal$, define $\Gb$ to be the kernel matrix with $\Gb_{ij} = \kappa_0(t_i, t_j)$ ($i, j \in \{1, \ldots, n\}$). Then, $\Gb$ is symmetric because $\Gb_{ij} = \kappa_0(t_i, t_j) = \kappa_0( t_j, t_i) =\Gb_{ji} $ using \eqref{eq:pf-athm1-1} and, for any $\vb \in \RR^n$,
  \begin{align}
    \label{eq:pf-athm1-2}
    \vb^\top \Gb \vb &= \sum_{i,j=1}^n v_i \Gb_{ij} v_j = \sum_{i=1}^n \sum_{j=1}^n v_i \kappa_0(t_i, t_j) v_j \overset{(i)}{=} \sum_{i=1}^n \sum_{j=1}^n v_i \left\{ \Rb  \psib(t_i) \right\}^\top \Rb  \psib(t_j) v_j \nonumber\\
    &=   \left\{ \Rb  \sum_{i=1}^n v_i \psib(t_i) \right\}^\top \Rb \sum_{j=1}^n  v_j   \psib(t_j) = \| \Rb  \sum_{i=1}^n v_i \psib(t_i) \|^2 \geq 0,
  \end{align}
  where $(i)$ follows from \eqref{eq:pf-athm1-1}; therefore,  $\kappa_0(\cdot, \cdot \mid \Lambdab)$ is a kernel function with feature map $\Rb  \psib(t)$ for any $t \in \Tcal$. Similarly, \eqref{aeq:3} implies that  for any $\tb, \tb' \in 2^{\Tcal}$,
  \begin{align}
    \label{eq:pf-athm1-3}
    \kappa_1(\tb, \tb' \mid \Lambdab) &=  \left\{ \Rb \sum_{t \in \tb} \psib(t) \right\}^\top \Rb \sum_{t' \in \tb'}  \psib(t') =
    \left\{ \Rb \Psib(\tb) \right\}^\top \Rb \Psib(\tb') \nonumber\\
    &=  \left\{ \Rb \Psib(\tb') \right\}^\top \Rb \Psib(\tb) =
    \left\{ \Rb \sum_{t' \in \tb'}  \psib(t') \right\}^\top \Rb \sum_{t \in \tb}  \psib(t)  .
  \end{align}
  Following the same arguments used in  \eqref{eq:pf-athm1-2}, we have that $\kappa_1(\cdot, \cdot \mid \Lambdab)$ is a kernel function with feature map $\Rb  \Psib(\tb)$ for any $\tb \in 2^{\Tcal}$. 
  
  The closure properties of kernel functions implies that $\kappa_2$ in \eqref{aeq:4} is a valid kernel function.
  Proposition 3.22 on Page 75 in \citet{ShaCri04} implies that $w_c \kappa_1(\cdot, \cdot \mid \Lambdab)$ is a kernel function because $\kappa_1(\cdot, \cdot \mid \Lambdab)$ is a kernel function and $w_c > 0$ ($c=1, \ldots, C$). The same proposition also implies that $\kappa_2(\Tb, \Tb' \mid \wb, \Lambdab)=\sum_{c=1}^C w_c \kappa_1(\tb_c, \tb_c \mid \Lambdab)$ is a valid kernel function for any $\Tb =\{\tb_1, \ldots, \tb_C\}, \Tb'=\{\tb_1', \ldots, \tb_C'\} \in (2^{\Tcal})^C$. The feature map of $\kappa_2(\cdot, \cdot \mid \wb, \Lambdab)$ is $\Psib_{\wb, \Rb}(\Tb)$ for any $\Tb \in (2^{\Tcal})^C$, and it is obtained following the same steps in \eqref{eq:pf-athm1-2} and   \eqref{eq:pf-athm1-3} as 
  \begin{align}\label{eq:pf-athm1-4}    
    \kappa_2(\Tb, \Tb' \mid \wb, \Lambdab) &= \sum_{c=1}^C w_c \kappa_1(\tb_c, \tb'_c \mid \Lambdab)  = \sum_{c=1}^C 
    \left\{ w_c^{1/2} \Rb  \Psib(\tb_c) \right\}^\top w_c^{1/2}  \Rb \Psib(\tb'_c) \nonumber\\
    &= \{w_1^{1/2}  \Rb \Psib(\tb_1), \ldots, w_C^{1/2} \Rb  \Psib(\tb_C)\}^\top
      \{w_1^{1/2}  \Rb \Psib(\tb_1'), \ldots, w_C^{1/2} \Rb \Psib(\tb_C')\} \nonumber\\
    &\equiv \{\Psib_{\wb, \Rb} (\Tb)\}^\top \Psib_{\wb, \Rb} (\Tb'). 
  \end{align}
  The dimension of $\Psib_{\wb, \Rb} (\Tb)$ is $C$  times the dimension of $\psib(t)$ for any $t \in \Tcal$, which equals $C|t^*||\Bcal|$. 
  
  Finally, the kernel $\kappa$ in \eqref{aeq:7} is obtained by the normalization of $\kappa_2$. First, normalize the feature map $\Psib_{\wb, \Rb} (\Tb)$ for any $\Tb \in (2^{\Tcal})^C$ by its Euclidean norm as 
  \begin{align}
    \label{eq:pf-athm1-51}
    \overline \Psib_{\wb, \Rb} (\Tb) &= \frac{\Psib_{\wb, \Rb} (\Tb)}{\| \Psib_{\wb, \Rb} (\Tb) \|} \overset{(i)}{\in} [0, 1]^{C|t^*||\Bcal|}, \nonumber\\  
    \| \Psib_{\wb, \Rb} (\Tb) \|^2 &= \{\Psib_{\wb, \Rb} (\Tb)\}^\top \Psib_{\wb, \Rb} (\Tb) = \kappa_2(\Tb, \Tb \mid \wb, \Lambdab), 
  \end{align}
  where $(i)$ follows from the Cauchy-Schwartz inequality.  Second, noting that $\| \Psib_{\wb, \Rb} (\Tb) \| = \{\kappa_2(\Tb, \Tb \mid \wb, \Lambdab)\}^{1/2}$, define the normalized $\kappa_2$ kernel with feature map $\overline \Psib_{\wb, \Rb} (\Tb) $ as
  \begin{align}
    \label{eq:pf-athm1-5}
    \kappa(\Tb, \Tb' \mid \wb, \Lambdab)    &= \{\overline \Psib_{\wb, \Rb} (\Tb) \}^\top \overline \Psib_{\wb, \Rb} (\Tb') = \frac{\{\Psib_{\wb, \Rb} (\Tb)\}^\top \Psib_{\wb, \Rb} (\Tb')}{\| \Psib_{\wb, \Rb} (\Tb) \| \| \Psib_{\wb, \Rb} (\Tb')
    \|} \nonumber\\
    &= \frac{\kappa_2(\Tb, \Tb' \mid \wb, \Lambdab)}{\{\kappa_2(\Tb, \Tb \mid \wb, \Lambdab)\}^{1/2} \{\kappa_2(\Tb', \Tb' \mid \wb, \Lambdab)\}^{1/2}}.
  \end{align}
  Using the feature map $\overline \Psib_{\wb, \Rb} (\Tb) $ in the definition of distance function in \eqref{eq:5}, we have that 
  \begin{align}
    \label{eq:eq:pf-athm1-6}
    \{d(\Tb, \Tb' \mid \wb, \Lambdab)\}^2 &= \kappa(\Tb, \Tb \mid \wb, \Lambdab) + \kappa(\Tb', \Tb' \mid \wb, \Lambdab) - 2 \kappa(\Tb, \Tb' \mid \wb, \Lambdab) \nonumber\\
    &= \{\overline \Psib_{\wb, \Rb} (\Tb) \}^\top \overline \Psib_{\wb, \Rb} (\Tb) + 
      \{\overline \Psib_{\wb, \Rb} (\Tb') \}^\top \overline \Psib_{\wb, \Rb} (\Tb') 
      - 2 \{\overline \Psib_{\wb, \Rb} (\Tb) \}^\top \overline \Psib_{\wb, \Rb} (\Tb') \nonumber\\
    &= \| \overline \Psib_{\wb, \Rb} (\Tb) \|^2 +  \| \overline \Psib_{\wb, \Rb} (\Tb') \|^2
      - 2 \{\overline \Psib_{\wb, \Rb} (\Tb) \}^\top \overline \Psib_{\wb, \Rb} (\Tb') \nonumber\\
    &= \| \overline \Psib_{\wb, \Rb} (\Tb) -  \overline \Psib_{\wb, \Rb} (\Tb') \|^2;
  \end{align}
  therefore, $d(\Tb, \Tb' \mid \wb, \Lambdab)$ is a valid distance function because it equals $\| \overline \Psib_{\wb, \Rb} (\Tb) -  \overline \Psib_{\wb, \Rb} (\Tb') \|$, the Euclidean distance between $\overline \Psib_{\wb, \Rb} (\Tb)$ and $\overline \Psib_{\wb, \Rb} (\Tb')$. 

  \underline{\textbf{Proof of (2):}} 
  Using \eqref{eq:pf-athm1-5}, we have that $\overline \Psib_{\wb, \Rb} (\Tb)$ is the feature map of $\kappa$. 
  
  \underline{\textbf{Proof of (3):}} 
  Proposition 3.24 on Page 76 in \citet{ShaCri04} implies that if $\kappa(\Tb, \Tb)$ is a kernel over $(2^{\Tcal})^C \times (2^{\Tcal})^C$, then $\left\{1 + \kappa(\Tb, \Tb' \mid \wb, \Lambdab) \right\}^s$, where $s \in \NN$, is also a valid kernel over $(2^{\Tcal})^C \times (2^{\Tcal})^C$ because it is a degree-$s$ polynomial in $\kappa(\Tb, \Tb \mid \wb, \Lambdab)$ with positive coefficients. Furthermore, scaling by $\sigma^2 > 0$ implies that $\kappa_s(\Tb, \Tb' \mid \sigma^2, \wb, \Lambdab) =  \sigma^2\left\{1 + \kappa(\Tb, \Tb' \mid \wb, \Lambdab) \right\}^s$ in \eqref{aeq:thm-kern} is also a valid kernel over $(2^{\Tcal})^C \times (2^{\Tcal})^C$ for any $s \in \NN$. 

  Finally, $\kappa_\gamma^e(\Tb, \Tb'  \mid \sigma^2, \phi, \wb, \Lambdab)$ in \eqref{aeq:thm-kern} is the $\gamma$-exponential kernel defined over $[0, 1]^{(C|t^*||\Bcal|)} \times [0, 1]^{(C|t^*||\Bcal|)}$ because $d(\Tb, \Tb'  \mid \wb, \Lambdab) = \| \overline \Psib(\Tb) - \overline \Psib(\Tb)\| $ and $\overline \Psib(\Tb) \in [0, 1]^{(C|t^*||\Bcal|)}$ using its definition in \eqref{eq:pf-athm1-51}. Section 4.2.1 in \citet{RasWil06} shows that this is a valid kernel function, so $\kappa_\gamma^e(\Tb, \Tb'  \mid \sigma^2, \phi, \wb, \Lambdab)$ is a valid kernel function on $(2^{\Tcal})^C \times (2^{\Tcal})^C$; see Equation 4.18 on Page 86 in \citet{RasWil06}.   
\end{proof}

\section{Proof of Theorem 2}
\label{asec:proof-theorem-2}

Recall that the population regression and classification models are
\begin{align}
  \label{aeq:th-mdl}
  y = \mu_0(\Psi) + \epsilon, \; \epsilon \sim N(0, \tau^2_0), \; \tau^2_0 > 0, \quad \log \, \frac{\Pr (y = 1)}{\Pr (y = 0)} = \mu_0(\Psi),  \quad \Psi \in [0, 1]^{d}, 
\end{align}
where $\tau_0^2$ is the error variance, $\Tb \in (2^{\Tcal})^C$ is embedded in $[0, 1]^d$ as $\Psi = \overline \Psib_{\wb, \Rb}(\Tb)$, $d = C|t^*||\Bcal|$ using the definition of  $\overline \Psib_{\wb, \Rb}(\Tb)$ in  \eqref{eq:pf-athm1-51}, and the effect of $\Tb$ on $y$ is modeled through the feature map $ \overline \Psib_{\wb, \Rb}(\Tb)$, which is fixed for a given $\wb, \Rb$. The distance between $f(\cdot)$ and $\mu_0(\cdot)$ is defined as the empirical norm
\begin{align}
  \label{aeq:th-met}
  \| f - \mu_0 \|_n &=  \left( \frac{1}{n} \sum_{i=1}^n \left[ f \left\{ \overline \Psib_{\wb, \Rb}(\Tb_i) \right\} - \mu_0\left\{ \overline \Psib_{\wb, \Rb}(\Tb_i) \right\}  \right]^2 \right)^{1/2}  \nonumber\\
  &\equiv \left[ \frac{1}{n} \sum_{i=1}^n \{f(\Psi_i) - \mu_0(\Psi_i)\}^2  \right]^{1/2} , 
\end{align}
where $\| \cdot \|_n$ conditions on the covariates $\Tb_1, \ldots, \Tb_n$ and its dependence on $\wb$ and $\Rb$ is suppressed. We now restate Theorem 2.
\begin{theorem}\label{athm2}
  Assume that the regression and classification models in \eqref{aeq:th-mdl} are true, the Mat\'ern covariance kernel has smoothness $\nu_1$, and $\sigma$ and $\phi$ parameters of the covariance function are known. 
  \begin{enumerate}
  \item If $\mu_0$ is $\nu_2$-regular in the sense that  $\mu_0 \in \mathfrak{C}^{\nu_2}[0, 1]^d \cap \mathfrak{H}^{\nu_2}[0, 1]^d$ with $\nu_2 \leq \nu_1$, then as $n \rightarrow \infty$,
    \begin{align}
      \label{aeq:mat-rate}
      &\Pi_{\text{Mat}, n}^{\text{Reg}} \left\{ f : \| f - \mu_0 \|_{n}  > C_1 \epsilon_n \mid \yb \right\} \rightarrow 0 \text{ in } \PP_0^n \text{-probability}, \nonumber \\
      &\Pi_{\text{Mat}, n}^{\text{Clas}} \left\{ f : \| f - \mu_0 \|_{n} > C_2 \epsilon_n \mid \yb \right\} \rightarrow 0 \text{ in } \PP_0^n \text{-probability},
    \end{align}    
    are satisfied for every $\wb, \Rb$ with $\epsilon_n = n^{- \nu_2 / (2 \nu_1 + d)}$, where $C_1, C_2$ are positive constants.
  \item If $\mu_0$ is infinitely smooth functions in the sense that  $\mu_0 \in \mathfrak{A}^{r,\gamma}[0, 1]^d$ for some $r \geq 1$ and $\gamma > 0$, then as $n \rightarrow \infty$
    \begin{align}
      \label{aeq:SE-rate}
      &\Pi_{\text{SE}, n}^{\text{Reg}}  \left\{ f : \| f - \mu_0 \|_{n}  > C_1 \epsilon_n \mid \yb \right\} \rightarrow 0 \text{ in } \PP^n_0 \text{-probability}, \nonumber \\
      &\Pi_{\text{SE}, n}^{\text{Clas}} \left\{ f : \| f - \mu_0 \|_{n} > C_2 \epsilon_n \mid \yb \right\} \rightarrow 0 \text{ in } \PP_0^n \text{-probability},
    \end{align}
    are satisfied  for every $\wb, \Rb$  with $\epsilon_n = n^{-1/2}(\log n)^{\max (1/r, 1/2 + d /4 )} $,  where $C_1, C_2$ are positive constants.
  \end{enumerate}     
\end{theorem}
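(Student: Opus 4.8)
The plan is to read the two models as nonparametric Gaussian regression and nonparametric binary (logistic) regression on the Euclidean cube $[0,1]^d$ with $d = C|t^*||\Bcal|$, and then apply the general posterior contraction theory for Gaussian process priors \citep[Theorems 11.22 and 11.23]{GhoVan17}. Because the covariates $\Tb_1, \ldots, \Tb_n$ are fixed and enter only through the fixed feature map $\Psi_i = \overline \Psib_{\wb, \Rb}(\Tb_i) \in [0,1]^d$, the GP prior on $f$ is literally a Mat\'ern (respectively SE) process on $[0,1]^d$ observed at the design points $\Psi_1, \ldots, \Psi_n$, and the empirical norm $\| \cdot \|_n$ in \eqref{aeq:th-met} is exactly the fixed-design $L_2$ seminorm attached to that design. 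Since $\sigma$ and $\phi$ are assumed known, the prior is a single fixed centered Gaussian measure, so all three conditions of the master theorems --- prior mass, sieve construction, and testing --- are governed by the single concentration function $\varphi_{\mu_0}(\epsilon) = \inf\{ \tfrac12 \| h \|_{\mathbb{H}}^2 : \| h - \mu_0 \|_\infty \le \epsilon \} - \log \Pr(\| W \|_\infty < \epsilon)$, where $\mathbb{H}$ is the RKHS of the process and $W$ a sample path. The contraction rate $\epsilon_n$ is then the solution of $\varphi_{\mu_0}(\epsilon_n) \le n \epsilon_n^2$.

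For part (1) I would insert the known concentration-function estimates for the Mat\'ern process of smoothness $\nu_1$ on $[0,1]^d$, whose RKHS is norm-equivalent to the Sobolev space $\mathfrak{H}^{\nu_1 + d/2}[0,1]^d$. The centered small-ball term contributes an exponent of order $\epsilon^{-d/\nu_1}$, and since $\mu_0 \in \mathfrak{C}^{\nu_2}[0,1]^d \cap \mathfrak{H}^{\nu_2}[0,1]^d$ with $\nu_2 \le \nu_1$ the decentering (approximation) term is controlled by approximating $\mu_0$ inside the RKHS at the Sobolev/H\"older rate dictated by $\nu_2$. Balancing the two terms in $\varphi_{\mu_0}(\epsilon_n) \le n\epsilon_n^2$ produces $\epsilon_n = n^{-\nu_2/(2\nu_1 + d)}$, the claimed Mat\'ern rate; these bounds are off-the-shelf and I would cite them rather than rederive them. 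For the regression model this directly yields the statement for $\Pi_{\text{Mat}, n}^{\text{Reg}}$.

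Part (2) is identical in structure but uses the squared-exponential process, whose RKHS consists of entire functions and whose concentration function is far smaller: the small-ball exponent grows only poly-logarithmically in $1/\epsilon$, and for $\mu_0 \in \mathfrak{A}^{r,\gamma}[0,1]^d$ the decentering term is likewise poly-logarithmic. Solving $\varphi_{\mu_0}(\epsilon_n) \le n\epsilon_n^2$ then gives the near-parametric rate $\epsilon_n = n^{-1/2}(\log n)^{\max(1/r,\, 1/2 + d/4)}$ of the theorem. The passage from regression to classification in both parts is handled by transferring these same concentration-function inputs into the binary-regression master theorem: on the sieves where $\| f \|_\infty$ is bounded, the logistic link has derivative bounded above and below, so the Hellinger and Kullback--Leibler divergences between the induced Bernoulli likelihoods are comparable, up to fixed constants, to the $\| \cdot \|_n$ distance between the corresponding $f$'s. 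The prior-mass and testing conditions therefore reduce to the regression ones up to constants, delivering the same $\epsilon_n$ with a possibly different constant $C_2$.

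The main obstacle is not the rate arithmetic, which is standard, but the bookkeeping needed to phrase everything in the fixed-design empirical seminorm $\| \cdot \|_n$ rather than the uniform or random-design $L_2$ metric in which the general results are usually stated. Two points require care: first, the metric entropy of the Gaussian sieves must be controlled in $\| \cdot \|_n$, which is harmless because $\| \cdot \|_n \le \| \cdot \|_\infty$ so sup-norm entropy bounds suffice; and second, in the classification case the two-sided comparison between $\| \cdot \|_n$ on $f$ and the statistical divergence of the Bernoulli likelihoods must hold uniformly over the sieve. Once these comparisons are in place, the four contraction statements in \eqref{aeq:mat-rate} and \eqref{aeq:SE-rate} follow from \citet[Theorems 11.22 and 11.23]{GhoVan17} applied with the cited Mat\'ern and squared-exponential concentration functions.
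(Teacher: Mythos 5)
Your proposal is correct and follows essentially the same route as the paper: both reduce the problem, via the fixed feature-map embedding $\Psi_i = \overline{\Psib}_{\wb,\Rb}(\Tb_i) \in [0,1]^d$, to standard GP contraction on the cube and then invoke Theorems 11.22 and 11.23 of \citet{GhoVan17}. The concentration-function estimates you describe for the Mat\'ern and squared-exponential processes are exactly the content of the lemmas the paper cites (Lemmas 11.36--11.37 for the Mat\'ern rate and Lemmas 11.38 and 11.41 for the SE rate), so your write-up is simply a more explicit account of the same argument.
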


\begin{proof}
  We note that $\Psi$ indexes the GP prior and $\Psi \in [0, 1]^{C|t^*||\Bcal|}$, where $C, |t^*|, |\Bcal|$ are defined in Theorem \ref{athm1}; therefore, we can directly apply Theorem 11.22, Theorem 11.23, and the results in Section 11.4.4 of \citet{GhoVan17} for Matern and SE GP priors indexed by a covariate lying in $ [0,1]^d$ with $d = C|t^*||\Bcal|$. Specifically, Lemma 11.36 and Lemma 11.37 imply that \eqref{aeq:mat-rate} is satisfied with $\epsilon_n = n^{- \nu_2 / (2 \nu_1 + d)}$ for a given $\wb, \Rb$.  Similarly, Lemma 11.38 and Lemma 11.41 imply that \eqref{aeq:SE-rate} is satisfied with $\epsilon_n = n^{-1/2}(\log n)^{\max (1/r, 1/2 + d /4 )}$. 
\end{proof}

\section{Sampling Algorithm for Posterior Inference}
\label{asec:sampl-algor-post}

\subsection{GP Regression}
\label{asec:gp-regression}

Recall Equation (11) in the main manuscript.
Denote the kernel matrix as $\Kb_{\thetab}$ with $\{\Kb_{\thetab}\}_{ij} = \kappa_{f}(\Tb_i, \Tb_j)$, where $f \in \{\text{exp}, \text{SE}\}$ and $i, j \in \{1, \ldots, n\}$. Let $\yb = (y_1, \ldots, y_n)^\top$, $\Xb$ be the $n \times p$ matrix with $\xb_i^\T$ as its $i$th row, $\fb = \{f(\Tb_1), \ldots, f(\Tb_n)\}^\T$, and $\epsilonb = (\epsilon_1, \ldots, \epsilon_n)^\T$. Then, the hierarchical Bayesian model for $\yb$ is
\begin{align}
  \label{aeq:der1}
  \yb = \Xb \betab + \fb + \epsilonb, \quad \epsilonb \mid \sigma^2, \alpha \sim N(\zero, \sigma^2 \alpha \Ib), \quad  (\betab, \sigma^2) \propto \sigma^{-2}, \quad \fb \mid \thetab \sim N(\zero, \Kb_{\thetab}),
\end{align}
where $\betab$ and $\fb$ are assumed to be independent apriori, $\alpha = \tau^2/\sigma^2$ is the inverse of signal-to-noise ratio, and $\phi \in (a_{\phi}, b_{\phi})$, where $0 < a_{\phi} < b_{\phi}$.  We assign a prior on $(\phi, \alpha)$ through $(u_1, u_2) \in \RR^2$ as 
\begin{align}
  \label{aeq:81}
  \phi = a_{\phi} + (b_{\phi} - a_{\phi}) / (1 + e^{-u_1}), \quad \alpha = e^{u_2}, \quad (u_1, u_2)^\top \sim N\{\zero, \diag(b_1, b_2)\}, 
\end{align}
where $b_1, b_2 > 0$. The parameters $\sigma^2,\phi$ are non-identified in $\kappa_{\text{exp}}, \kappa_{\text{SE}}$, but this  does not affect the inference on $f$ or prediction of the response \citep{RasWil06}.

First, we derive the full conditional of $(\sigma^2, \betab)$ and the likelihood for drawing $\phi$ and $\alpha$. Marginalizing over $\fb$ in \eqref{aeq:der1} gives
\begin{align}
  \label{aeq:der3}
  \yb = \Xb \betab + \etab, \quad \etab \sim N (\zero, \sigma^2 \Cb_{yy}), \quad \Cb_{yy} = \Kb_{\thetab} / \sigma^2 + \alpha \Ib.
\end{align}
If $\Lb$ is a lower triangular matrix such that $\Cb_{yy} = \Lb \Lb^\top$, then define
\begin{align}
  \label{aeq:82}
  \tilde \yb = \Lb^{-1} \yb, \quad \tilde \Xb = \Lb^{-1} \Xb, \quad \hat \betab = (\tilde \Xb^\T \tilde \Xb)^{-1} \tilde \Xb^\top \tilde \yb, \quad \hat{\tilde \yb} = \tilde \Xb \hat \betab, \quad
  \tilde \etab = \Lb^{-1} \etab,
\end{align}
which reduces \eqref{aeq:der3} to
\begin{align}
  \label{aeq:der4}
  \tilde \yb  = \tilde \Xb \betab + \tilde \etab, \quad N(\zero, \sigma^2 \Ib).
\end{align}
The full conditional of $(\sigma^2, \betab)$ given the rest is
\begin{align}
  \label{aeq:der5}
  \sigma^2, \betab \mid \yb, \alpha, \phi \propto (2 \pi)^{-n/2} (\sigma^2)^{-n/2 - 1} e^{-\frac{1}{2 \sigma^2} \| \tilde \yb - \hat {\tilde \yb} \|^2} e^{-\frac{1}{2 \sigma^2} (\betab - \hat \betab)^\top (\tilde \Xb \tilde \Xb) (\betab - \hat \betab)}; 
\end{align}
therefore, we generate $\sigma^2 \mid \yb, \alpha, \phi$ and $\betab \mid \sigma^2, \yb, \alpha, \phi$ as
\begin{align}
  \label{aeq:der6}
  \sigma^2  \mid \yb, \alpha, \phi \sim \frac{\| \tilde y - \hat {\tilde y} \|^2} {\chi^2_{n-p}}, \quad
  \betab \mid  \sigma^2, \yb, \alpha, \phi \sim N \{\hat \betab, \sigma^2 (\tilde \Xb^\top \tilde \Xb)^{-1}\}
\end{align}
respectively, where $\chi^2_{n-p}$ is a chi-square random variable with $n-p$ degrees of freedom. The likelihood of $\phi, \alpha$ given $\yb, \sigma^2, \betab$ from \eqref{aeq:der1} and \eqref{aeq:81}. We use it in Elliptical Slice Sampling (ESS)  for drawing ($\phi$, $\alpha$). We set $\tau^2 = \alpha \cdot \sigma^2$.

Second, we drive the form of the full conditionals of $\fb^* = \{f(\Tb^*_1), \ldots, f(\Tb^*_m)\}^\top$ and $\yb^* = \{y^*_1, \ldots, y^*_m\}^\top$. For notational convenience, define $\Kb_*$ and $\Kb_{**}$ to be the matrices satisfying $\{\Kb_*\}_{ij} = \kappa_f(\Tb_i, \Tb^*_j)/ \sigma^2$ and $\{\Kb_{**}\}_{jj'} = \kappa_f(\Tb_j^*, \Tb^*_{j'})/ \sigma^2$, where $i, i' \in \{1, \ldots, n\}$ and $j, j' \in \{1, \ldots, m\}$. Then, the GP prior on $f(\cdot)$ and \eqref{aeq:der1} imply that
\begin{align}
  \label{aeq:der7}
  \begin{bmatrix}
    \yb \\
    \fb^*
  \end{bmatrix} \sim
  N \left(
  \begin{bmatrix}
    \Xb \betab \\
    \zero
  \end{bmatrix},
  \begin{bmatrix}
    \sigma^2 \Cb_{yy} & \sigma^2 \Kb_*\\
    \sigma^2 \Kb_*^\top & \sigma^2 \Kb_{**}
  \end{bmatrix}
\right);
\end{align}
the full conditional distributions of $\fb^*$ and $\yb^*$ are
\begin{align}
  \label{aeq:der8}
  \fb^* &\mid \yb, \sigma^2, \betab, \phi, \alpha \sim N(\mb_*, \Vb_*), \quad \yb^* \mid \betab, \fb^*, \tau^2  \sim N(\Xb^* \betab + \fb^*, \tau^2 \Ib), \nonumber\\
  \mb_* &= \Kb_*^\top \Cb_{yy}^{-1} (\yb - \Xb \betab), \quad 
  \Vb_* = \sigma^2 \left( \Kb_{**} - \Kb_{*}^\top \Cb_{yy}^{-1} \Kb_{*} \right) .
\end{align}
The derivation of the sampling algorithm GP regression is complete.

\subsection{GP Classification}
\label{asec:gp-classification}

Recall Equation (17) in the main manuscript. The hierarchical Bayesian model for $\yb$ is
\begin{align}
  \label{eq:cder1}
  y_i \sim \text{Bernoulli}(p_i), \quad p_i = (1 + e^{- \psi_i})^{-1}, \quad \psi_i = \xb_i^\top \betab + f(\Tb_i), \quad i = 1, \ldots, n.
\end{align}
The choice of prior distributions on $\betab, f(\cdot)$ and the kernels remain the same as in \eqref{aeq:der1}, but the prior $\thetab = (\phi, \sigma^2)^\top$ is assigned through $(u_1, u_2) \in \RR^2$ as 
\begin{align}
  \label{aeq:81log}
  \phi = a_{\phi} + (b_{\phi} - a_{\phi}) / (1 + e^{-u_1}), \quad \sigma^2 = e^{u_2}, \quad (u_1, u_2)^\top \sim N\{\zero, \diag(b_1, b_2)\}. 
\end{align}
where $b_1, b_2 > 0$. Let $\psib = \{\psi_1, \ldots, \psi_n\}$, $\Ab = [\Xb \; \Ib]$, and $\bb =(\betab, \fb)^\top$. Then, \eqref{eq:cder1} reduces to
\begin{align}
  \label{eq:6}
  \psib = \Xb \betab + \fb = [\Xb \; \Ib] \bb  = \Ab \bb.
\end{align}
For notational convenience, define $\Kb_{\thetab}$, $\Kb_*$, and $\Kb_{**}$ to be the matrices satisfying $\{\Kb_{\theta}\}_{ii'} = \kappa_f(\Tb_i, \Tb_{i'})$, $\{\Kb_*\}_{ij} = \kappa_f(\Tb_i, \Tb^*_j)$, and $\{\Kb_{**}\}_{jj'} = \kappa_f(\Tb_j^*, \Tb^*_{j'})$, where $i, i' \in \{1, \ldots, n\}$ and $j, j' \in \{1, \ldots, m\}$.

First, we augment $(y_i, x_i, \Tb_i)$ ($i=1, \ldots, n$) with random variables $\omega_1, \ldots, \omega_n$ specific to every observation such that $\omega_i$ are marginally distributed as PG(1, 0), where PG is the P\'olya-Gamma distribution with parameters $b=1$ and $c=0$, respectively. Let $\bar{\yb} = (y_1 - 1/2, \ldots, y_n - 1/2)^{\T}$, $\omegab = (\omega_1, \ldots, \omega_n)^\T$, $n \times n$ diagonal matrix $\Omegab = \diag(\omegab)$, pseudo responses $\zb = \Omegab^{-1} \bar \yb$. Then, Equation (19) in the main manuscript implies that the conditional likelihood of $\bb$ given $\yb, \omegab$ and the associated model are defined as 
\begin{align}
  \label{aeq:log2a}
  \ell (\etab \mid \yb, \omegab) \propto e^{ - \frac{1}{2} (\Ab \bb - \zb)^\top \Omega (\Ab \bb - \zb)}, \quad \zb = \Ab \bb + \epsilonb = \Xb \betab + \fb + \epsilonb, \quad \epsilonb \sim N(0, \Omegab^{-1}),
\end{align}
The full conditional of $\betab$ is obtained by first marginalizing over $\fb$ and rewriting the model for $\zb$ in \eqref{aeq:log2a} as
\begin{align}
  \label{eq:cder2}
  \zb = \Xb \betab + \xib, \quad \xib \sim N(0, \Cb_{zz}), \quad \Cb_{zz} = \Kb_{\thetab} + \Omegab^{-1} , .
\end{align}
Following \eqref{aeq:82}, if $\Lb$ is a lower triangular matrix such that $\Cb_{zz} = \Lb \Lb^\top$, then define
\begin{align}
  \label{aeq:821}
  \tilde \zb = \Lb^{-1} \zb, \quad \tilde \Xb = \Lb^{-1} \Xb, \quad \hat \betab = (\tilde \Xb^\T \tilde \Xb)^{-1} \tilde \Xb^\top \tilde \zb. 
\end{align}
Because we have imposed a flat prior on $\betab$, we use the same arguments used in \eqref{aeq:der5} to obtain that  
\begin{align}
  \label{eq:cder3}
  \betab \mid \yb, \omegab, \phi, \sigma^2 \propto e^{-\frac{1}{2} \| \tilde \zb - \hat {\tilde \zb} \|^2} e^{-\frac{1}{2} (\betab - \hat \betab)^\top (\tilde \Xb \tilde \Xb) (\betab - \hat \betab)}, \quad \betab \mid \yb, \omegab, \phi, \sigma^2 \sim N\{\hat \betab, (\tilde \Xb^\T \tilde \Xb)^{-1}\}.
\end{align}
The likelihood of $\phi, \sigma^2$ given $\yb, \betab$ from \eqref{eq:cder2} and \eqref{aeq:81log}. We use ESS  for drawing ($\phi$, $\sigma^2$). 

Second, we derive the form of the full conditionals of $\fb^* = \{f(\Tb^*_1), \ldots, f(\Tb^*_m)\}^\top$ and $\yb^* = \{y^*_1, \ldots, y^*_m\}^\top$. The GP prior on $f(\cdot)$ and \eqref{aeq:log2a} imply that
\begin{align}
  \label{eq:cder7}
  \begin{bmatrix}
    \zb \\
    \fb^*
  \end{bmatrix} \sim
  N \left(
  \begin{bmatrix}
    \Xb \betab \\
    \zero
  \end{bmatrix},
  \begin{bmatrix}
    \Cb_{zz} & \Kb_*\\
    \Kb_*^\top & \Kb_{**}
  \end{bmatrix}
\right);
\end{align}
the full conditional distributions of $\fb^*$ and $\yb^*$ are
\begin{align}
  \label{aeq:der81}
  \fb^* &\mid \yb, \betab, \phi, \sigma^2 \sim N(\mb_*, \Vb_*), \quad y^*_i \mid \betab, \fb^* \sim \text{Bernoulli}(p^*_i), \\
  \mb_* &= \Kb_*^\top \Cb_{zz}^{-1} (\zb - \Xb \betab), \quad 
  \Vb_* = \Kb_{**} - \Kb_{*}^\top \Cb_{zz}^{-1} \Kb_{*} , \quad
          p^*_i = (1 + e^{- \xb_i^{* \top} \betab - f(\Tb^*_i)})^{-1}.\nonumber
\end{align}
Finally, Theorem 1  in \citet{Poletal13} implies that the full conditional of  $\omega_i$ given $\Xb, \betab, \fb$ is PG$\{1, |\xb_i^\top \betab + (\fb)_i|\}$ for $i = 1, \ldots, n$. The derivation of the sampling algorithm GP classification is complete.

\section{Simulations for GP Regression}
\label{asec:simul-regr}

We now summarize the simulation results for GP regression with SE kernel that are based on Section 5.2 of the main paper.  The three regression models are based on the three classification models in Equations (27), (28), and (29)  of the main paper. We use the same methods that are used for evaluating the performance of GP classification in the main manuscript. The performance of all the methods is compared using the mean square error (MSE), mean square prediction error (MSPE), and the predictive coverage based on a 95\% confidence or credible interval.

\emph{\textul{First two simulations.}} Recall that in first two simulations, there are four chronic conditions. The first chronic condition is denoted using the codes in $\{A, B, AA, BB, BA, AB\}$, where every code denotes a ``diagnosis''. Replacing $(A, B)$  with $(C, D)$, $(E, F)$, and $(G, H)$ in this set gives the set of codes for the second, third, and fourth chronic conditions, respectively. The chronic conditions are further structured into two groups: the first group includes the first and third chronic conditions and the second group includes the remaining two. Our simulation assigns nine codes to every patient, where six out of the nine codes are from the 12 codes defining first or second group of chronic conditions and the remaining three come from the other chronic condition group. 

Our regression models add error terms to the linear predictors in the classification models of the main paper. Let $z_{ij}^{(c)}$ be a dummy variable indicating the presence of the $j$th code in the $c$th chronic condition in the $i$th patient ($c=1, \ldots, 4$; $j=1, \ldots, 6$; $i = 1, \ldots, n+m$), where codes follow the dictionary order, $n$ is the training data size, and $m$ testing data size. The first and second group dummy variables for the $i$th patient are $\{z_{ij}^{(c)}: j=1, \ldots, 6; c = 1, 3\}$ and $\{z_{ij}^{(c)}: j=1, \ldots, 6; c = 2, 4\}$. The response $y_i$ is simulated as
\begin{align}
  \label{eq:sup-sim1}
  y_i = 0.1 +  0.2 x_i + \sum_{c=1}^4   \sum_{j=1}^6 (-1)^c  z_{ij}^{(c)} + \epsilon_i, \quad x_i \overset{\text{ind.}}{\sim} N(0,1),  \quad \epsilon_i \overset{\text{ind.}}{\sim} N(0,\tau^2_0)
\end{align}
for $i = 1, \ldots, n+m$. The second simulation study is a slight modification of the first. Dropping the $\xb_i^\top \betab$ term in \eqref{eq:sup-sim1}, we modify it to only include the interaction of $\{AB, BA\}$, $\{CD, DC\}$, $\{EF, FE\}$, and $\{GH, HG\}$, respectively, as
\begin{align}
  \label{eq:sup-sim2}
  y_i &= 2 \left( z_{i \text{EF}}^{(2)} \cdot z_{i \text{FE}}^{(2)}  + z_{i \text{GH}}^{(4)} \cdot z_{i \text{HG}}^{(4)} \right) - \left( z_{i \text{AB}}^{(1)} \cdot z_{i \text{BA}}^{(1)} + z_{i \text{CD}}^{(3)} \cdot z_{i \text{DC}}^{(3)} \right) 
  + \epsilon_i, \nonumber\\
  x_i &\overset{\text{ind.}}{\sim} N(0,1),  \quad \epsilon_i \overset{\text{ind.}}{\sim} N(0,\tau^2_0),
\end{align}
for $i = 1, \ldots, n+m$. Setting $n=1000$, $m=100$, and $\tau_0^2 = 0.01$ in both simulations, we simulate the data and replicate the two simulation setups 10 times. 

In both simulations, our method performs the best in terms of coverage and is among the top performers in terms of MSE and MSPE (Table \ref{atab:sim1}). It achieves the nominal predictive CI coverage in both simulations. Our method overestimates $\tau^2$ slightly in both simulations resulting in a small MSE. Compared to support vector machine (SVM), kernel ridge regression (KRR), random forest, and linear regression and its penalized extensions, the MSE and MSPE of the GP with SE kernel are slightly larger in the first simulation and significantly smaller in the second simulation, respectively. This result is expected because the first simulation favors linear regression, whereas the second includes interactions of dummy variables which are only captured by SVM, KRR, and the GP with SE kernel.   

The SE kernel matrix is also useful in improving the performance of linear regression and its penalized extensions. If we use features obtained from the SE kernel matrix as predictors in linear regression and its penalized extensions and random forest, then this leads to substantial reduction of MSE and MSPE in the second simulation over the same methods that only use dummy variables as the predictors. Furthermore, using features obtained from the SE kernel yields MSE and MSPE that are significantly smaller than those for SVM and KRR in the second simulation.

\emph{\textul{Third simulation.}}
For this simulation, we used ICD codes of patients in the UIHC EHR data with brain and other nervous system or breast as the cancer primary sites. For the $i$th (pseudo) patient in this subset, let 
$z_i = \| \Psib (\Tb_i) \|^2$ and $\delta_i=1$ if the patient with $\Tb_i$ code has brain and other nervous system cancer and $\delta_i=-1$ otherwise, where $\Psib (\Tb_i)$ is defined in \eqref{aeq:3}. The response $y_i$ is simulated independently as 
\begin{align}
  \label{eq:sip-sim3}
  y_i = 0.1 +  0.2 x_i + 3\tan(z_i) + 3\delta_i, \quad x_i \overset{\text{ind.}}{\sim} N(0,1),  \quad i = 1, \ldots, n+m,
\end{align}
where $n=1000$, $m=100$, and $\tau_0^2 = 0.01$. We replicate this simulation setup 10 times. 

In this simulations, our method performs the best in terms of coverage and is the top performers in terms of MSE and MSPE (Table \ref{atab:sim1}). It achieves the nominal predictive CI coverage and overestimates $\tau^2$ slightly but has a small MSE. Compared to SVM, KRR, random forest, linear regression, and its penalized extensions, the MSE and MSPE of the GP with SE kernel are significantly smaller. This result is expected because this simulation includes nonlinear periodic trend which are only captured by SVM, KRR, and the GP with SE kernel.   

The SE kernel matrix is also useful in improving the performance of linear regression and its penalized extensions in this simulation. If we use features obtained from the SE kernel matrix as predictors in linear regression and its penalized extensions and random forest, then this leads to substantial reduction of MSE and MSPE over the same methods that only use dummy variables as the predictors. Furthermore, using features obtained from the SE kernel yields MSE and MSPE that are significantly smaller than those for SVM and KRR. 

Similar to the GP classification simulation results in the main manuscript, the distinguishing feature of our method is that the posterior draws of parameters and response can be used for quantifying uncertainty in predictions and parameter estimates, which is key in biomedical applications. We conclude that the GP with SE covariance kernel is better than other kernel based methods in quantifying uncertainty and in modeling the effect of chronic conditions on the response. 

\begin{table}[ht]
  \caption{Results of the three regression simulation studies. Every entry in the table is the average of its values across ten replications. We use `-' if the R package did not provide an output for computing the metric. }
  \label{atab:sim1}
  \centering
{\tiny
  \begin{tabular}{|r|c|c|c|c|}
    \hline
    \multicolumn{5}{|c|}{\textbf{FIRST SIMULATION}} \\
    \hline    
    & MSE ($\betab$) & MSE ($\tau^2$) & MSPE & Predictive Coverage \\ 
    \hline
    SE-GP & 0.40 & 0.00 & 0.13 & 0.95 \\ 
    \hline
    & \multicolumn{4}{c|}{Linear Regression with SE Kernel-Based Covariates} \\
    \hline 
    No Penalty & 0.07 & 0.32 & 0.57 & 0.15 \\ 
    Ridge Penalty & 0.39 & 0.36 & 0.61 & - \\ 
    Lasso Penalty & 0.07 & 0.32 & 0.57 & - \\ 
    \hline    
    & \multicolumn{4}{c|}{Linear Regression} \\
    \hline 
    No Penalty & 81.07 & 0.00 & 0.01 &  0.12 \\ 
    Ridge Penalty & 0.00 & 0.00 & 0.01 & - \\ 
    Lasso Penalty & 25.64 & 0.00 & 0.01 &-  \\ 
    \hline  
    & \multicolumn{4}{c|}{Kernel Ridge Regression}  \\
    \hline 
    Spectrum & - & 0.08 & 0.09 &-  \\ 
    Boundrange & - & 0.01 & 0.01 & - \\ 
    \hline
    & \multicolumn{4}{c|}{Support Vector Machine}  \\
    \hline 
    Spectrum & -  & 0.10 & 0.10  &-  \\ 
    Boundrange & - & 0.02 & 0.02 & - \\
    \hline
    & \multicolumn{4}{c|}{Random Forest}  \\
    \hline 
    Default & -  & 130.07 & 11.38 &-  \\ 
    SE Kernel-Based Covariates & - & 0.01 & 0.13 & - \\ 
    \hline
    \multicolumn{5}{c|}{\textbf{SECOND SIMULATION}} \\
    \hline
    & MSE ($\betab$) & MSE ($\tau^2$) & MSPE & Predictive Coverage \\ 
    \hline
    SE-GP & 1.10 & 0.00 & 0.15 & 0.95 \\ 
    \hline
    & \multicolumn{4}{|c|}{Linear Regression with SE Kernel-Based Covariates} \\
    \hline 
    No Penalty & 5.00 & 0.67 & 0.83 &  0.21 \\ 
    Ridge Penalty & 0.62 & 0.67 & 0.83 & - \\ 
    Lasso Penalty & 0.49 & 0.67 & 0.83 & - \\ 
    \hline    
    & \multicolumn{4}{c|}{Linear Regression} \\
    \hline 
    No Penalty & 6.97 & 0.43 & 0.67 & 0.04 \\  
    Ridge Penalty & 0.25 & 0.45 & 0.67 & - \\ 
    Lasso Penalty & 0.28 & 0.45 & 0.67 & - \\  
    \hline  
    & \multicolumn{4}{c|}{Kernel Ridge Regression}  \\
    \hline 
    Spectrum & - & 0.41 & 0.42 & -  \\ 
    Boundrange & - & 0.49 & 0.50 & - \\
    \hline
    & \multicolumn{4}{c|}{Support Vector Machine}  \\
    \hline 
    Spectrum & -  &  20.38 & 20.47  &-  \\ 
    Boundrange & - & 21.52 & 21.61 & - \\
    \hline
    & \multicolumn{4}{c|}{Random Forest}  \\
    \hline 
    Default & -  & 5.32 & 2.31  &-  \\ 
    SE Kernel-Based Covariates & - & 0.08 & 0.29 & - \\ 
    \hline
    \multicolumn{5}{|c|}{\textbf{THIRD SIMULATION}} \\
    \hline
    & MSE ($\betab$) & MSE ($\tau^2$) & MSPE & Predictive Coverage \\ 
    \hline
    SE-GP & 0.32 & 1.12 & 2.43 &  0.97 \\  
    \hline
    & \multicolumn{4}{c|}{Linear Regression with SE Kernel-Based Covariates} \\
    \hline 
    No Penalty & 5764.93 & 219.66 & 14.54 & 0.36 \\  
    Ridge Penalty & 0.73 & 219.84 & 14.54 & - \\  
    Lasso Penalty & 0.73 & 219.87 & 14.54 & - \\ 
    \hline    
    & \multicolumn{4}{c|}{Linear Regression} \\
    \hline 
    No Penalty & 84.73 & 18.62 & 1077860706.04 & 0.78 \\
    Ridge Penalty & 4.27 & 10928.80 & 90.24 & -  \\ 
    Lasso Penalty & 1.22 & 23349.05 & 109.79 & - \\ 
    \hline  
    & \multicolumn{4}{c|}{Kernel Ridge Regression}  \\
    \hline 
    Spectrum & - & 22095.18 & 22098.00 & - \\ 
    Boundrange & - & 28392.57 & 28395.72 & -\\ 
    \hline
    & \multicolumn{4}{c|}{Support Vector Machine}  \\
    \hline 
    Spectrum & -  & 51339.34 & 51343.55 & - \\ 
    Boundrange & - & 53262.98 & 53267.27 & - \\
    \hline     
    & \multicolumn{4}{c|}{Random Forest}  \\
    \hline 
    Default & -  & 30745.21 & 30748.61 & - \\ 
    SE Kernel-Based Covariates & - & 46.13 & 4.29 & - \\ 
    \hline        
  \end{tabular}
}%
\end{table}

\end{document}